\newtheorem{definition}{Definition}
\newtheorem{corollary}{Corollary}
\newtheorem{theorem}{Theorem}
\newtheorem{lemma}{Lemma}
\newtheorem{proposition}{Proposition}
\newtheorem{observation}{Observation}
\newtheorem{notation}{Notation}
\newtheorem{example}{Example}[section]
\author{Grzegorz Lisowski
\institute{University of Warwick\\ United Kingdom}
\and
Sylvie Doutre
\institute{University of Toulouse\\ France}
\and
Umberto Grandi
\institute{University of Toulouse \\ France}
}
\title{Aggregation in Value-Based Argumentation Frameworks}
\begin{document}
\maketitle
\begin{abstract}
  Value-based argumentation enhances a classical abstract argumentation graph - in which arguments are modelled as nodes connected by directed arrows called attacks - with labels on arguments, called \emph{values}, and an ordering on values, called \emph{audience}, to provide a more fine-grained justification of the attack relation. With more than one agent facing such an argumentation problem, agents may differ in their 
  ranking of values. When needing to reach a collective view, such agents face a dilemma between two equally justifiable approaches: aggregating their views at the level of values, or aggregating their attack relations, remaining therefore at the level of the graphs. We explore the strenghts and limitations of both approaches, employing techniques from preference aggregation and graph aggregation, and propose a third possibility aggregating rankings extracted from given attack relations.
\end{abstract}

\section{Introduction}

The strength of arguments plays an important role in establishing the outcome of a discussion. Strong arguments have a stronger impact on the interlocutors than arguments perceived as weak. In particular, a strong argument can be accepted
even if it is undermined by insignificant issues. The methods of formal modelling of the impact of the arguments' strength have been intensively studied in the abstract argumentation literature (see e.g. \cite{StrangeNumber,dunne2011weighted}). In particular, the problem of the perceived strength of arguments is crucial in a multi-agent environment. 

It is far from being clear how a consensus can be reached among multiple agents 
disagreeing about which arguments are stronger than others. This observation connects to an extensive discussion on methods of finding a consensus among agents who disagree about the aspects of argumentation they participate in (e.g. \cite{dunne2012argument,awad2015judgement,delobelle2016merging}).

One of the approaches to this problem emphasizes the role of \emph{values} to which arguments appeal in determining their strength. In value-based argumentation, pioneered by Bench-Capon \cite{bench2003persuasion} and whose motivation traces back to Perelman \cite{perelman1971new}, the concept of abstract argumentation due to Dung \cite{dung1995acceptability} is extended to incorporate the information on the values associated with the particular arguments. More precisely, following Dung's model, arguments are conceived as points in the graph, linked by an \emph{attack} relation, which allows for designing rationality constraints (semantics) on acceptable sets of arguments. Additionally, each argument is assigned a unique value\footnote{The generalisation of the model by allowing arguments to appeal to multiple values has been studied by, e.g., Kaci and van der Torre \cite{kaci2008preference}. For the sake of simplicity, in the current paper we will work with the simple scenario in which an argument is assigned a unique value.}. Further, in this approach, the strength of arguments is induced by an agent's preference over values that arguments appeal to: an attack from an argument appealing to a weak value on an highly valued argument is blocked. In this way, a \emph{defeat graph} based on an individual view on the hierarchy of values is induced. The value-based argumentation approach has been further studied and extended in various ways (e.g. \cite{bench2007audiences}). 

In this paper we consider a value-based argumentation framework, and several agents, each of them with its own preference over values. In this multi-agent context, a consensus is sought. Observe that the considered consensus is not directly about the choice of acceptable arguments, what is known in the literature as the semantics \cite{dung1995acceptability}, but rather about the attack graph itself and its justification in terms of preference over values.

Note that there are two intuitive manners of aggregating the possibly conflicting views of the agents.
The first method focuses on the individual ordering over values, and aggregates them into a collective ordering over values that in turns induces an attack relation over labelled arguments. Off-the-shelf techniques from \emph{preference aggregation} and social choice theory in general can be used to find suitable aggregators 
(see, e.g., \cite{arrow1952social,satterthwaite1975strategy,gibbard1973manipulation,brandt2016handbook}).
The second approach to this problem is to aggregate directly the agents' attack graphs, searching for a collective justification in the form of an ordering over the values a posteriori. Techniques in
\emph{graph aggregation} 
\cite{endriss2017graph} and \emph{belief merging} \cite{EveraereEtAlTRENDS2017} have already proven useful in various settings related to argumentation 
\cite{delobelle2016merging, delobelle2018, chen2019preservation}.

The aim of the current paper is to investigate the properties of the two approaches 
%
in the context of value-based argumentation, their benefits and limitations. 
The two possibly conflicting approaches are reminescent of the discursive dilemma in the literature on judgment aggregation \cite{Endriss16,GrossiPigozzi14}, where aggregating the views of a group of agents on the premises supporting a given statement might conflict with the aggregated view on the statement itself, posing a serious challenge to the search for justifications in collective reasoning.
This point relates also to the recent line of research on explainability of automated decision-making, which deals with designing methods to explain to a human user automated decisions 
(see, e.g. \cite{miller2019}). Argumentation theory is already 
being employed in this context (e.g \cite{Fan2015OnCE}).

\paragraph{Our contribution} After setting the stage for aggregating the views of agents on values and attacks over arguments, we 
analyze the arguably simpler method of aggregating directly the individual attack relations. As a first impossibility result we show that, if a number of natural properties are expected for the graph aggregation rule, then it is not always possible to extract a collective justification for the aggregated attack relation. 
We then move to the study of the aggregation of orderings over values. As there are several orderings over values that can justify a given attack relation,  we investigate whether the result of the aggregation is stable with respect to this choice. We show a second impossibility result, again under standard axiomatic requirements on the aggregation procedure. As a side result, we characterize frameworks in which the ordering over values justifying a given attack relation is unique.
We conclude then by providing an alternative aggregation procedure lying between the two studied approaches. To do so, we make use of the value-based framework, the context of our problem, to define an aggregation procedure over preferences on values that are suitably extracted from the individual attack relations, showing that it can ensure the same axiomatic properties of the preference aggregation rule chosen.

\paragraph{Related Work}
%
It is worth noting that the methods of obtaining a consensus structure of argumentation based on values has been studied before. For instance, Modgil \cite{modgil2009reasoning} explored a modes for meta-argumentation about the conflicts between arguments which captures also argumentation about preferences over values.
Furthermore, Airiau~\emph{et al.}\cite{airiau2016rationalisation} studied methods of checking if a set of argumentation frameworks can be seen as a set of defeat graphs reflecting disagreement on the importance of values. This is the closest related work to this paper, and we will refer to it further in the following sections. 
Finally, preference aggregation in the context of value-based argumentation has been previously proposed in the context of value-based argumentation by Pu~\emph{et al.}\cite{pu2013social}.

Outside the realm of value-based argumentation, a number of approaches have recently considered the problem of aggregation in multiagent argumentation. Coste-Marquis~\emph{et al.} \cite{Coste-MarquisAIJ07} were the first to tackle this problem, proposing distance-based methods for the aggregation of argumentation structures. Dunne \emph{et al.} \cite{DunneMW12} later proposed an axiomatic study of the aggregation of argumentation frameworks, later expanded by the work of Delobelle \emph{et al.} \cite{delobelle2018}. Closest to our analysis are the works of Awad \emph{et al.} \cite{awad2015judgement} and Chen and Endriss \cite{chen2019preservation}, which focuses on several problems related to the collective rationality of the aggregation process, i.e., whether properties satisfied by the input attack relations are preserved in the collective outcome.

This paper expands previous work by the authors \cite{SAFA18}, building on the Master thesis of Lisowski~\cite{GL18}. 

\paragraph{Paper structure} The paper is organized as follows. In Section~2 we give the basic definitions of value-based argumentation, preference and graph aggregation, and list a number of desirable properties for the aggregation process.
Further, Section~3 describes the properties of the approaches based on aggregating submitted graphs and on preference orderings. In Section~4, we analyze the combined approach. Finally, in Section~5 we provide conclusions and suggestions for further development.

\section{The setting}

The model we study in this paper is the \emph{value-based argumentation setting}, due to Bench-Capon \cite{bench2003persuasion}. This formal framework extends Dung's abstract argumentation model \cite{dung1995acceptability}. Hence, the basic notion employed in the studied setting is of an \emph{argumentation framework}. It is a directed graph, in which vertices correspond to arguments, and edges to the attack relation, capturing conflicts between arguments.

\begin{definition}
		An \emph{argumentation framework} ($\mathit{AF}$) is a pair $\mathit{AF}= \langle A, \rightarrow \rangle$, where $A$ is a set of arguments and $\rightarrow \subseteq A^2$ is the attack relation. We denote the fact that $\langle a,b \rangle \in \rightarrow$ as $a \rightarrow b$.  $\langle a,b \rangle \not \in \rightarrow$ is denoted $a \not\rightarrow b$.
\end{definition}
	
In order to capture how arguments appeal to values, argumentation frameworks are extended to \emph{value-based argumentation frameworks}. These are labeled directed graphs, in which labels correspond to values that arguments relate to, an argument relating to one value only. For the sake of simplicity
we assume that every argument is associated with a single value. 
	
\begin{definition}
		A \emph{value-based argumentation framework} (\textit{VAF}) is a tuple $\textit{VAF} = \langle A, \rightarrow, V, val\rangle$,
		where $A$ is a set of arguments, $\rightarrow \subseteq A^2$ is an attack relation, $V$ is a set of values and $val: A \rightarrow V$ is a function assigning values to arguments.
\end{definition}
	
	Then, we can define how an agent's preferences over values determine the \emph{strength of values} from the perspective of a particular agent. To achieve that, agents are allowed to express preference orderings over values. Following the literature on value-based argumentation, such a preference ordering is called an \emph{audience}.
	
	\begin{definition}
		Let $\textit{VAF} = \langle A, \rightarrow, V, val\rangle $. An \emph{audience} $P$ is a linear ordering\footnote{A linear ordering is an irreflexive, transitive and complete binary relation.} over $V$. We denote that a value $v_1\in V$ is more preferable than a value $v_2\in V$ for $P$ as $ v_1 \succ_P v_2$. 
	\end{definition}
	
	Then, the way in which agents perceive the strength of arguments influences the way in which they perceive the structure of argumentation. Intuitively, an agent might disregard the fact that a weak argument undermines a strong argument, as it is not important enough to be considered a legitimate reason to reject a vital point. Formally, we say that an argument $a$ \emph{defeats} an argument $b$, if $a$ attacks $b$ and either $b$'s value is weaker than $a$'s, or $a$ and $b$ appeal to the same value.
    It is worth noting that as we require audiences to be linear, this means that $a$ must be stronger than $b$.
	
		\begin{definition}
		Let $\textit{VAF} = \langle A, \rightarrow, V, val\rangle $ be a \textit{VAF} and $P$ be an audience. Then, we say that an argument $a$ \emph{defeats} an argument $b$ \emph{for $P$} (we denote it as $a \rightarrow_P b$) iff $a \rightarrow b$ and it is not the case that $val(b) \succ_P val(a)$.
	\end{definition}
	
Further, given an initial \textit{VAF} and an audience, we can consider an argumentation framework based on the set of arguments present in \textit{VAF}, and the defeat relation. Such an argumentation framework is called a \emph{defeat graph}.
	
	\begin{definition}
		Let $\textit{VAF} = \langle A, \rightarrow, V, val\rangle $  and $P$ be an audience. The \emph{defeat graph} of $\textit{VAF}$ \emph{induced by $P$} is an argumentation framework $\textit{AF} = \langle A, \rightarrow_P \rangle$.
	\end{definition}
	
	Let us illustrate the defined notion with an example, adapted from \cite{airiau2016rationalisation}:
	
 \begin{example}\label{ex:running}
		Consider a debate regarding the possible ban of diesel cars, aimed at the reduction of air pollution in big cities. The following arguments are included in the discussion:
		\begin{itemize}
			\item A - Diesel cars should be banned.
			\item B - Artisans, who should be protected, cannot change their cars as it would be too expensive for them.
			\item C - We can subsidize electric cars for artisans.
			\item D - Electric cars, which could be a substitute for diesel, require too many new charging stations.
			\item E - We can build some charging stations.
			\item F - We cannot afford any additional costs.
			\item G - Health is more important than economy, so we should spend whatever is needed for fighting pollution.
		\end{itemize}
		
        \noindent
        Further, it can be noticed that these arguments appeal to certain values. In particular, arguments $A,G$ appeal to environmental responsibility (ER), $B,C$ to social fairness (SF), $F$ to economic viability (EV) and $D,E$ - to infrastructure efficiency (IE).
        
		These arguments are represented on the graph with a mapping of values depicted on Figure~\ref{fig:running}. For each argument, the first element of its description is its name, and the second one is the name of the value it appeals to.

		\begin{figure}[H]
			\centering
		\begin{tikzpicture}
			[->,shorten >=1pt,auto,node distance=1.5cm,
			semithick]
			\node[shape=circle,draw=black] (A) {\textit{A}, ER};
			\node[shape=circle,draw=black] (B) [above of=A, left of= A] {\textit{B}, SF};
			\node[shape=circle,draw=black] (D) [below of = A, left of=A] {\textit{D}, IE};
			\node[shape=circle,draw=black] (C) [ left of=B] {\textit{C}, SF};
			\node[shape=circle,draw=black] (E) [ left of=D] {\textit{E}, IE};
			\node[shape=circle,draw=black] (F) [above of=E, left of= E] {\textit{F}, EV};
			\node[shape=circle,draw=black] (G) [left of= F] {\textit{G}, ER};
			
			\path (B) edge (A);
			\path (D) edge (A);
			\path (C) edge (B);
			\path (E) edge (D);
			\path (F) edge (C);
			\path (F) edge (E);
			\path (F) edge [bend right] (G);
			\path (G) edge [bend right] (F);
            
			\end{tikzpicture}
			\caption{Value-based argumentation framework VAF  
			of Example~\ref{ex:running}; 
			each node is labeled with its name and the name of the value it appeals to; the arrows correspond to the attack relation.}\label{fig:running}

		\end{figure}
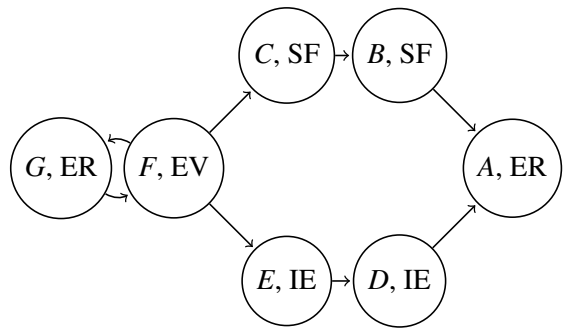
	
	Let us now consider the structure of this discussion from the perspectives of two experts 
	of a decision-making jury, that should decide on whether Diesel cars should be banned or not. 
	
	For Expert~1, economic viability is the most important. She ranks infrastructure efficiency lower, but higher than social fairness.  Environmental responsibility is the least important for her. Then, from her point of view attacks in which the attacker appeals to a less important value than the attacked argument are disregarded. Taking her preferences into account, the  structure presented in Figure~\ref{fig:expert12} (a) is obtained, after the elimination of disregarded attacks.
    Let us now consider another expert
	of the jury, who believes that economic viability is the most important value. Expert~2 ranks environmental responsibility second, and social fairness third. Finally, she considers infrastructure efficiency as the least important. From her perspective, the structure of successful attacks is much different, as indicated in Figure~\ref{fig:expert12} (b). 
		
	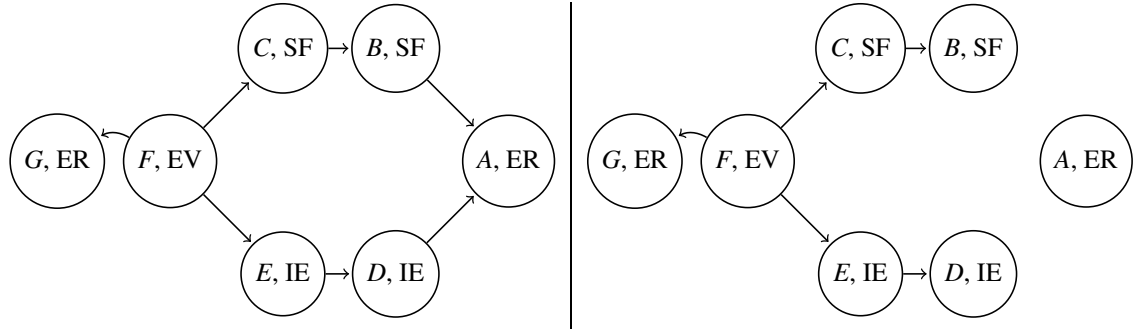
\begin{figure}[H]
	\centering
	\begin{tabular}{c|c}
\centering
			\begin{tikzpicture}
			[->,shorten >=1pt,auto,node distance=1.5cm, semithick]
			\node[shape=circle,draw=black] (A) {\small{\textit{A}, ER}};
			\node[shape=circle,draw=black] (B) [above of=A, left of= A] {\small \textit{B}, SF};
			\node[shape=circle,draw=black] (D) [below of = A, left of=A] {\small \textit{D}, IE};
			\node[shape=circle,draw=black] (C) [ left of=B] {\small \textit{C}, SF};
			\node[shape=circle,draw=black] (E) [ left of=D] {\small \textit{E}, IE};
			\node[shape=circle,draw=black] (F) [above of=E, left of= E] {\small \textit{F}, EV};
			\node[shape=circle,draw=black] (G) [left of= F] {\small \textit{G}, ER};
			
			\path (B) edge (A);
			\path (D) edge (A);
			\path (C) edge (B);
			\path (E) edge (D);
			\path (F) edge (C);
			\path (F) edge (E);
			\path (F) edge [bend right] (G);
			
			\end{tikzpicture}
	     &  
\centering
    
			\begin{tikzpicture}
			[->,shorten >=1pt,auto,node distance=1.5cm,
			semithick]
			\node[shape=circle,draw=black] (A) {\small \textit{A}, ER};
			\node[shape=circle,draw=black] (B) [above of=A, left of= A] {\small \textit{B}, SF};
			\node[shape=circle,draw=black] (D) [below of = A, left of=A] {\small \textit{D}, IE};
			\node[shape=circle,draw=black] (C) [ left of=B] {\small \textit{C}, SF};
			\node[shape=circle,draw=black] (E) [ left of=D] {\small \textit{E}, IE};
			\node[shape=circle,draw=black] (F) [above of=E, left of= E] {\small \textit{F}, EV};
			\node[shape=circle,draw=black] (G) [left of= F] {\small \textit{G}, ER};

			\path (C) edge (B);
			\path (E) edge (D);
			\path (F) edge (C);
			\path (F) edge (E);
			\path (F) edge [bend right] (G);
			
			\end{tikzpicture}
			\end{tabular}

			\caption{Defeat graphs based on (a) Expert~1's  (EV $\succ$ IE $\succ$ SF $\succ$ ER) and (b) Expert~2's (EV $\succ$ ER $\succ$ SF $\succ$ IE) audiences. In the graphs, nodes are labeled with names of arguments and names of values they appeal to. The arrows correspond to the induced defeat relations  based on the experts' audiences.
			}\label{fig:expert12}
		\end{figure}

    \end{example}

It is worth observing that in the current paper we assume for the sake of simplicity that agents have a certain \emph{common ground}. Namely, they agree on the set of arguments and on the values that they appeal to. This assumption is indeed restrictive: in many cases participants of a dispute disagree on these parameters. We aim at addressing these issue in future work.

\subsection{Aggregating preferences}
A natural way  of retrieving a collective view on the structure of argumentation taking into account individual opinions on the importance of values is to employ a \emph{preference aggregation} approach. This method considers a profile of preference orderings, corresponding to individuals' opinions, and provides a single, collective preference ordering. We will denote the set of individuals as $\mathcal{N}= \{1, \dots, n \}$.

	\begin{definition}
	Let $\textit{Pref} = \{\succ_1, \dots, \succ_n \}$ be a set of preference orderings over a set $V$. A profile of \emph{preference orderings} is a tuple $\textbf{P} = \langle \succ_1, \dots, \succ_n \rangle$ consisting of the elements of $\textit{Pref}$.
	\end{definition}
	
\begin{definition}[Preference Aggregation Rule]
Let $V = \{ v_1, \dots , v_n \}$ be a set of options and $\mathcal{P}$ be the set of all  linear orderings over $V$. Then a \emph{preference aggregation rule} is a function $F: \mathcal{P}^m \rightarrow \mathcal{P}$. We denote the set of agents supporting $v_i \succ v_j$  in a profile \textbf{P} as $N_{\textbf{P}}^{v_i \succ v_j}$.
\end{definition}

This method provides a straightforward way of dealing with disagreements with respect to views on the hierarchy of values. 
All agents start first by submitting their preference orderings over values. 
Further, the orderings are aggregated with employment of a chosen preference aggregation function, resulting in a collective preference ordering ($\succ_{coll}$). Finally, by applying this collective preference ordering over values to a considered \textit{VAF}, a collective defeat graph $\textit{AF}_{coll}$ is obtained. Figure~\ref{scheme1} depicts this process. 


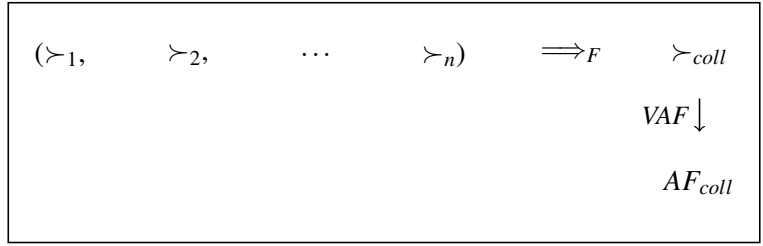
\begin{figure}[H] 
\centering

\begin{tikzpicture}
		[->,auto,node distance=1.7cm,
		semithick]
		\node[shape=circle,draw=white] (A) {($\succ_1$,};
		\node[shape=circle,draw=white] (B) [ right of= A] {$\succ_2$,};
		\node[shape=circle,draw=white] (C) [ right of= B] {$\dots$};
		\node[shape=circle,draw=white] (D) [ right of= C] {$\succ_n$)};
		\node[shape=circle,draw=white] (E) [ right of= D] {$\Longrightarrow_F$};
		\node[shape=circle,draw=white] (F) [ right of= E] {$\succ_{coll}$};
		
		
		\node[shape=circle,draw=white] (S)  [below of= F] {$\textit{AF}_{coll}$};
		
		
		\draw [->] (F) -- (S) node[midway,left] {\small \textit{ VAF}} ;
		
		
		\node[draw=black, fit=(A) (S) ](FIt1) {};
		\end{tikzpicture} 

\caption{Preferences over values are aggregated into a collective ordering on values, which in turn induces an attack graph once a \emph{VAF} is fixed.} \label{scheme1}
\end{figure}

A simple example of a preference aggregation rule, which we will use to illustrate the proposed mechanism, is the \emph{Borda} rule. Let us first define the \emph{rank} of an option from the perspective of the given preference ordering.

\begin{notation}
Let $P$ be a linear order over some set $V$. 
We denote as $rank_{P}(v)$ the position of the option $v$ in the ordering $P$. Formally, $rank_{P}(v) = |\{ v' \in V| v \succ_P v'\}|$. 
\end{notation}

Then, given a profile  \textbf{P} of preference orderings $P_i$ over set $V$, $\textit{Borda}(\textbf{P})=\succ_{coll}$, such that $v_i \succ_{coll} v_j$ iff $\displaystyle \sum_{P_i \in \textbf{P}}  \  rank_{P_i}(v_i) > \sum_{P_i \in \textbf{P}} \ rank_{P_i}(v_j)$ for every $v_i, v_j \in V$, combined with a tie-breaking rule to obtain a strict ranking in case of equality of the Borda score.\footnote{We refer to $\displaystyle \sum_{P_i \in \textbf{P}}  \  rank_{P_i}(v_i)$ as to the \emph{score} of $v_i$.} 
%
Let us illustrate how a collective defeat graph is computed using the Borda rule on our running example.

\begin{example}\label{ex:borda}
(Continuation of Example~\ref{ex:running}) Let us consider an additional expert, Expert~3. Let us present her audience, and let us recall the audiences of the other two experts. These three experts form a  panel \textbf{P}.
\begin{itemize}
    \item Expert 1: EV $\succ$ IE $\succ$ SF $\succ$ ER
    \item Expert 2: EV $\succ$ ER $\succ$ SF $\succ$ IE 
    \item Expert 3: SF $\succ$ ER $\succ$ EV $\succ$ IE
\end{itemize}

\noindent
Let us now calculate the result of the Borda rule for \textbf{P}. The scores are: ER: 4, EV: 7, IE: 2, SF: 5. So, \textit{Borda}(\textbf{P})= EV $\succ$ SF $\succ$ ER $\succ$ IE. The defeat graph for this ordering is presented in Figure~\ref{fig:borda}; this is the collective defeat graph for the panel.

 	\begin{figure}[H]
 
 	    \centering
			\begin{tikzpicture}
			[->,shorten >=1pt,auto,node distance=1.5cm,
			semithick]
			\node[shape=circle,draw=black] (A) {\textit{A}, ER};
			\node[shape=circle,draw=black] (B) [above of=A, left of= A] {\textit{B}, SF};
			\node[shape=circle,draw=black] (D) [below of = A, left of=A] {\textit{D}, IE};
			\node[shape=circle,draw=black] (C) [ left of=B] {\textit{C}, SF};
			\node[shape=circle,draw=black] (E) [ left of=D] {\textit{E}, IE};
			\node[shape=circle,draw=black] (F) [above of=E, left of= E] {\textit{F}, EV};
			\node[shape=circle,draw=black] (G) [left of= F] {\textit{G}, ER};
			
			\path (B) edge (A);
			\path (C) edge (B);
			\path (E) edge (D);
			\path (F) edge (C);
			\path (F) edge (E);
			\path (F) edge [bend right] (G);

			\end{tikzpicture}

			\caption{Collective defeat graph for the panel \textbf{P}, under the Borda rule. }\label{fig:borda}
		\end{figure}
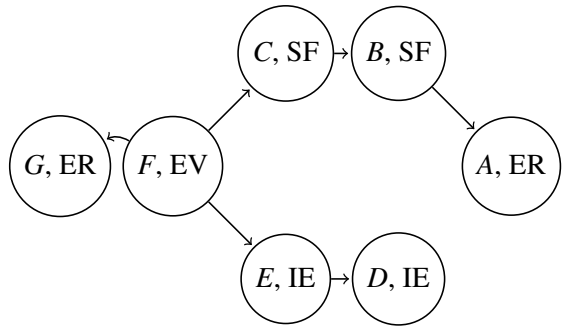
\end{example}

\subsection{Aggregating defeat graphs}

In performing preference aggregation over values, the previous section assumed 
that these individual preferences are known, as an input data of the search for a collective position. However, it may happen that what is known from the agents is the way they see arguments and their relationships, that is, what their defeat graph is, and not what their preferences over values are. For instance, the experts may not want to reveal what their inner preferences are, but just present how they see the resulting situation in terms of arguments and attacks. 
In such contexts, the search for a collective position can be done by aggregating the defeat graphs.

An intuitive requirement for this process is that the resulting collective defeat graph be \emph{justifiable} with respect to a value-based argumentation framework the input defeat graphs are based on: that is, there should exist an ordering of values that, when applied to the initial known or unknown value-based framework, produces the collective defeat graph.

If graph aggregation is fully justified when the preferences over values are not known, one may also think of using this technique to obtain a collective view when they are known. The two techniques, graph aggregation and preference aggregation, will be compared in further sections. For now, let us present graph aggregation. 

A \emph{graph aggregation} rule is a function taking a profile of graphs as an input and providing a single graph. It is worth noting that in the considered setting we only take into account profiles of graphs sharing the set of vertices and require that the collective graph is also based on this set.

\begin{definition}[Graph Aggregation Rule]
Let $A$ be a set of arguments and $Graphs$ be the set of all argumentation frameworks  based on  $A$.  Then a \emph{graph aggregation rule} is a function $F: \textit{Graphs}^m \rightarrow \textit{Graphs}$. For any pair of arguments $a,b \in A$, we denote the set of agents supporting $a \rightarrow b$  in a profile \textbf{AF} as $N_{\textbf{AF}}^{a \rightarrow b}$.
\end{definition}

The aggregation process is depicted on Figure~\ref{scheme2}. 

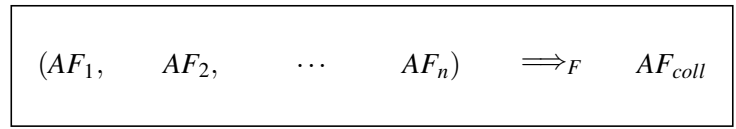
\begin{figure}[H] 
\centering

\begin{tikzpicture}
		[=>,->,shorten >=1pt,auto,node distance=1.6cm,
		semithick]
		\node[shape=circle,draw=white] (A) {$(\textit{AF}_1,$};
		\node[shape=circle,draw=white] (B) [ right of= A] {$\textit{AF}_2,$};
		\node[shape=circle,draw=white] (C) [ right of= B] {$\dots$};
		\node[shape=circle,draw=white] (D) [ right of= C] {$\textit{AF}_n)$};
		\node[shape=circle,draw=white] (E) [ right of= D] {$\Longrightarrow_F$};
		\node[shape=circle,draw=white] (F) [ right of= E] {$\textit{AF}_{coll}$};

		\node[draw=black, fit=(A) (F) ](FIt1) {};
		\end{tikzpicture} 

\caption{Individual graphs are aggregated into a collective graph.}\label{scheme2}
\end{figure}

An example of an intuitive class of graph aggregation rules is the class of \emph{quota rules}. There, an edge is included in the collective graph if a specified fraction of agents includes it in their submitted graphs. 

\begin{definition}[Quota rule]
	Let $\textbf{AF}$ be a profile of argumentation frameworks. Then, $F$ is a \emph{quota rule} if there is $q \in [0,1]$ such that for any attack $a \rightarrow b$, $a \rightarrow b \in F(\textbf{AF})$ iff $N_{\textbf{AF}}^{a \rightarrow b} \geq \lfloor q* n \rfloor$, where $n$ is the total number of voters.
\end{definition}
\noindent
The most well-known quota rule is the \emph{(weak) majority rule}, where $q=\frac{1}{2}$.

\medskip

It is worth noting that in the current setting we are often assuming that agents submitting their graphs have a well defined hierarchy of values in mind. Indeed, we can often safely assume that agents submit graphs which are \emph{justifiable}. A profile is justifiable if there is a single \textit{VAF} such that all members of the profile are defeat graphs of \textit{VAF}.\footnote{Justifiability of profiles of graphs has been studied in depth in \cite{airiau2016rationalisation}.}

\begin{definition}[Justifiable Profiles]
Let \textbf{AF} be a profile of graphs. \textbf{AF} is \emph{justifiable} if there is a \textit{VAF} such that for any \textit{AF}$_i \in \textbf{AF}$, \textit{AF}$_i$ is a single defeat graph of \textit{VAF}.
\end{definition}

In the later part of the paper we will mainly focus on graph aggregation rules restricted to justifiable inputs. In such rules, input profiles are only limited to profiles of graphs such that there is a \textit{VAF} such that all members of the profile are defeat graphs of \textit{VAF}.

We will be searching for graph aggregation rules which guarantee that if all the graphs considered in an input are defeat graphs of some \textit{VAF}, then so is the output. 
It is worth noting that this problem is a special case of the problem of \emph{collective rationality}, studied intensively in the social choice literature, also in the context of the abstract argumentation (see, e.g \cite{chen2019preservation, rahwan2010collective}). This problem relates to the issue of whether an aggregation rule makes sure that if  all of the agents involved in a decision process are submitting an option which satisfies a certain rationality constraint, so does the outcome of the procedure. In the context of graph aggregation, collective rationality means that if all input graphs satisfy a given property, the output graph satisfies it as well.

We will refer to the collective rationality with respect to the property described before as to the \emph{preservation of being a defeat graph}.

\begin{definition}[Preservation of being a defeat graph]
    A graph aggregation rule preserves being a defeat graph of \textit{VAF} if whenever for every $AF_i \in \textbf{AF}$, \textit{AF}$_i$ is a defeat graph of \textit{VAF}, so is $F(\textbf{AF})$.
    
\end{definition}

We can now illustrate the application of graph aggregation rules on the running example. 

\begin{example}\label{ex:graphaggr}
(Continuation of Example~\ref{ex:borda}). Consider now the defeat graphs induced by the experts' audiences as the input of the aggregation process. We will apply the majority rule to provide a common graph for the panel. Let us list out the attacks such that the majority of agents agree that they should hold: $F \rightarrow G$, $F \rightarrow C$, $F \rightarrow E$, $C \rightarrow B$, $E \rightarrow D$, $B \rightarrow A$. 

Note that
these edges corresponds to the graph obtained while using the Borda rule in the Example~\ref{ex:borda}. It is worth observing, however, that it is not necessarily always the case, as will be 
highlighted by the comparison between graph and preference aggregation conducted in the following sections. 
\end{example}

\subsection{Aggregation Axioms}

In following sections we will often refer to \emph{desirable properties} of the aggregation mechanisms under consideration. Here, we will define such properties (or \emph{axioms}), both for preference and graph aggregation. The considered axioms are standard in computational social choice, and are explained for instance in \cite{brandt2016handbook} for preference aggregation and \cite{endriss2014collective} for graph aggregation. The wording of the definitions has been changed for the ease of presentation.

\subsubsection{Preference aggregation rule}

Let us start with defining the desired properties for the preference aggregation approach. We will start with an informal description of the axioms, which will be followed by a formal definition.

A preference aggregation function is \emph{unanimous} if it never chan\-ges any ordering between options that all agents agree upon; the function is 
    \emph{anonymous} if it provides the same output regardless of the ordering of items in its input; 
    \emph{independent}, if the decision about the ordering of two values only depends on the way in which voters order this particular pair.
In addition, we will strive to find rules which are not \emph{dictatorial}. Formally:


\begin{definition}
A preference aggregation function $F$ is:
 \begin{itemize}
     \item \emph{Unanimous}: if whenever in a profile of orderings $\textbf{P}$=$\langle P_1, \dots, P_n \rangle$ all voters submit that $v_i \succ v_j$, then $v_i \succ v_j$ in $F(\textbf{P})$.
     \item \emph{Anonymous}: if for any profile of orderings \textbf{P}=$\langle P_1, \dots, P_n \rangle$, any pair of items $v_i, v_j$ and any permutation $\pi: \mathcal{N} \rightarrow \mathcal{N}$, $a \rightarrow b \in F(\textbf{P})$ iff $a \rightarrow b \in F(P_{\pi(1)}, \dots, P_{\pi(n)})$.

     \item \emph{Independent}: if it holds that for any pair of profiles of preference orderings $\textbf{P}$, $\textbf{P}'$  and any pair of values $v_1, v_2 \in V$, if $N_{\textbf{P}}^{v_1 \succ v_2} = N_{\textbf{P'}}^{v_1 \succ v_2}$, then $ v_1 \succ v_2  \in F(\textbf{P})$ iff $ v_1 \succ v_2  \in F(\textbf{P'})$.
     \item \emph{Dictatorial}: if there is $i$ such that for any \textbf{P}=$\langle P_1, \dots, P_n \rangle$, $F(\textbf{P})=P_i$.
 \end{itemize}
\end{definition}

\subsubsection{Graph aggregation rule}

Let us now define corresponding axioms for graph aggregation. They are adopted from \cite{endriss2014collective}. 

The \emph{unanimity} axiom states that if all agents agree that some attack should be included in the collective graph, then it is.
The \emph{anonymity} condition expresses that a choice of attacks does not depend on the name of voters.
\emph{Independence} states that all attacks are treated equally in any profile of defeat graphs.
We also additionally demand for a rule not to be \emph{dictatorial}. Formally:


\begin{definition}

A graph aggregation rule $F$ is:

    \begin{itemize}
        \item \emph{Unanimous:} if for every profile of argumentation frameworks \textbf{AF}=$\langle AF_1, \dots, AF_n \rangle$, if there is some pair of arguments $a,b \in A$ such that for every $AF_i \in \textbf{AF}$ $a \rightarrow_i b$, $a \rightarrow b \in F(\textbf{AF})$.

        \item \emph{Anonymous: }if for every profile of argumentation frameworks \textbf{AF}=$\langle AF_1, \dots, AF_n \rangle$, any attacks $a \rightarrow b$ and any permutation $\pi: \mathcal{N} \rightarrow \mathcal{N}$, $a \rightarrow b \in F(\textbf{AF})$ iff $a \rightarrow b \in F(\textbf{AF}_{\pi(1)}, \dots, \textbf{AF}_{\pi(n)})$.

        \item \emph{Independent: }if for any pair of profiles of argumentation frameworks $\textbf{AF}, \textbf{AF'}$, if $N_{a \rightarrow b}^{\textbf{AF}} = N_{a \rightarrow b}^{\textbf{AF'}}$, then $a \rightarrow b \in F(\textbf{AF})$ iff $a \rightarrow b \in F(\textbf{AF'})$.

         \item \emph{Dictatorial}: if there is $i$ such that for any \textbf{AF}, $F(\textbf{AF})=\textit{AF}_i$.
    \end{itemize}
  
\end{definition}

\section{Impossibility Results}

In this section we explore the limitation of the two aggregation approaches defined in Section~2, thus comparing aggregation at the level of values with aggregation at the level of attack graphs in the context of value-based argumentation.

\subsection{Graph aggregation}

We will commence with exploring the properties of the graph aggregation in the context of aggregating views on rankings of values. 

Note that graph aggregation requires substantially less information than the preference aggregation based mechanism. Indeed, a graph aggregation rule can be used without specifying a context of a \textit{VAF}. On the other hand, a preference aggregation can only be obtained when the values to which arguments appeal are specified. So, to use preference aggregation in the context of value-based argumentation we need to have the knowledge of both the profile of graphs and the \textit{VAF}.

Unfortunately, as graph aggregation rules do not take the context of a \textit{VAF} into account it might be the case that even though all agents participants of a discussion submit argumentation frameworks which are induced by their preferences over values, the collective graph is not justifiable by any preference ordering in the context of a given debate based on values. This is why the preservability of being a defeat graph is of high interest. In this section we will investigate the conditions  which graph aggregation rules need to satisfy to preserve being a defeat graph.

Unfortunately, as we will see later, graph aggregation rules that also preserves being a defeat graph cannot satisfy all of the axioms we defined.
We begin by showing that no quota rule can preserve being a defeat graph:

\begin{proposition}\label{quotas}
	Being a defeat graph is not preserved by any quota rule.
\end{proposition}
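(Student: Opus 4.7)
The plan is to exhibit a single VAF together with a single two-agent profile of defeat graphs that defeats every quota rule simultaneously, proceeding by case analysis on the resulting threshold. I take the smallest VAF that admits more than one defeat graph: $A = \{a, b\}$, $V = \{v_1, v_2\}$ with $val(a) = v_1$ and $val(b) = v_2$, and mutual attacks $a \rightarrow b$ and $b \rightarrow a$. Since an audience is a linear order on $V$, there are exactly two audiences, producing the two possible defeat graphs $G_1 = \{a \rightarrow b\}$ (under $v_1 \succ v_2$) and $G_2 = \{b \rightarrow a\}$ (under $v_2 \succ v_1$); in particular, every defeat graph of this VAF contains exactly one directed attack, and the self-loops $a \rightarrow a$ and $b \rightarrow b$ never appear.

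Next I set up the profile with $n = 2$ agents, one reporting $G_1$ and the other reporting $G_2$, noting that this profile is justifiable by construction. The threshold of the quota rule is $t = \lfloor 2q \rfloor$, and as $q$ ranges over $[0,1]$ this takes exactly three values: $t = 0$ when $q \in [0, 1/2)$, $t = 1$ when $q \in [1/2, 1)$, and $t = 2$ when $q = 1$. I would handle the three regimes in turn. If $t = 0$, every ordered pair passes the threshold, so the collective graph is the complete directed graph on $\{a,b\}$ (including the two self-loops), which contains attacks absent from the VAF and hence cannot be a defeat graph. If $t = 1$, both $a \rightarrow b$ and $b \rightarrow a$ each collect exactly one vote and both are included, producing $\{a \rightarrow b, b \rightarrow a\}$, which is neither $G_1$ nor $G_2$. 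If $t = 2$, neither attack collects two votes, so the collective is empty, which is again neither $G_1$ nor $G_2$.

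The construction is deliberately minimalistic, so very little work is needed beyond the observation that symmetry between the two agents forces the quota rule to either overshoot (include both rival attacks) or undershoot (drop both). The only genuinely delicate point I would be careful to spell out is that the three ranges of $q$ exhaust $[0,1]$, so that the trichotomy above really covers every quota rule; once this is in place, the conclusion that no quota rule preserves being a defeat graph follows immediately from the three sub-arguments.
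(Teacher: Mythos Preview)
Your argument is correct, and it takes a genuinely different route from the paper's. The paper proceeds quota-by-quota: given $q$, it picks $n$ with $\tfrac{1}{n}<q$, builds a directed $n$-cycle on $n$ arguments each carrying its own value, and lets $n$ agents each keep exactly one edge of the cycle; every edge then falls below the threshold, and the empty output cannot be a defeat graph since cancelling all edges of the cycle would force a cyclic preference on values. Your construction, by contrast, is a single two-argument, two-value, two-agent instance that defeats all quota rules at once via a trichotomy on $\lfloor 2q\rfloor$.

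Two remarks on the comparison. First, your proof is strictly more economical and also more complete: the paper's choice of $n$ with $\tfrac{1}{n}<q$ tacitly assumes $q>0$, whereas your $t=0$ case handles $q=0$ directly. Second, your $t=0$ case leans on self-loops entering the output because every ordered pair trivially meets threshold~$0$; this is faithful to the paper's literal definition of a quota rule, but if one reads ``attack'' more restrictively (only pairs present in the VAF), you still win, since at threshold~$0$ both $a\rightarrow b$ and $b\rightarrow a$ have support $\geq 0$ and the output again contains both, which is not a defeat graph. You might mention this fallback so the argument does not appear to hinge on a definitional edge case.
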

\begin{proof}
	
	Consider a quota defeat aggregation rule $F$ with an arbitrary quota $q \in [0,1]$. Then, take some natural number $n$ such that $\frac{1}{n} < q$. Further, construct a $\textit{VAF}= \langle A, \rightarrow, V, val \rangle $ such that  $A = \{ a_1, \dots, a_n \}$, $\rightarrow= \{ a_i \rightarrow a_{i+1}| i <n \} \cup \{ a_n \rightarrow a_1 \}$. Note that this attack relation forms a cycle. Also, let $val(a_i)=v_i$ for any argument $a_i$ (now all arguments are assigned unique values). Then, consider a set of agents $N = \{1, \dots, n \}$, submitting defeat graphs such that for any $i < n$, in agent $i$'s perspective only $a_i \rightarrow_i a_{i+1}$, while for agent $n$ only $a_n \rightarrow^n a_1$. It is easy to see that these are defeat graphs of \textit{VAF}. For any agent $i$, the set of attacks $\{a_n \rightarrow a_m | a_n \not\rightarrow_i a_m \}$ is a chain of length $n-1$. Then, we can consider a preference ordering over values such that for any  $a_n \rightarrow a_m$ such that $a_n \not\rightarrow_i a_m$, $val(a_m) \succ_i val(a_n)$. Clearly, this gives us a desired defeat graph.

Note now, that in the result of application of $F$ to this profile, no attacks are preserved, as each of them has a support of fewer agents than $q * |N|$. But now suppose that we have an ordering $P$ over $V$ under which such a defeat graph would be obtained. Then, we would need to have than $v_n \succ_P v_{n-1} \succ_P \dots \succ_P v_1 \succ_P v_n$. But then $P$ is not transitive, so it is not a preference ordering.
\end{proof}

Let us consider the following axiomatic characterisation of the class of quota rules:

\begin{theorem}[\cite{endriss2017graph}] \label{quot}
A graph aggregation $F$ rule is anonymous, monotonic and independent iff $F$ is a quota rule. 
\end{theorem}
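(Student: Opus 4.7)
The plan is to prove both directions of the biconditional separately. For the easier ($\Leftarrow$) direction, assuming $F$ is a quota rule with quota $q$, the three axioms follow by direct inspection of the defining formula $a \rightarrow b \in F(\textbf{AF})$ iff $N_{\textbf{AF}}^{a \rightarrow b} \geq \lfloor q \cdot n \rfloor$: anonymity holds because permuting agents does not alter the cardinality of the support set; independence holds because the criterion refers only to the count $N_{\textbf{AF}}^{a \rightarrow b}$ for the edge under consideration; monotonicity holds because increasing support can only push the count past the threshold, never below it.

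For the harder ($\Rightarrow$) direction, I would extract a threshold through a three-step distillation of information. First, independence tells us that whether $a \rightarrow b$ is accepted depends only on the identity of the agents supporting $a \rightarrow b$, not on the rest of the profile. Second, anonymity tells us this decision depends only on the cardinality of that support set; hence, for every edge $(a,b)$ there is a characteristic function $f_{a,b}:\{0,1,\dots,n\}\to\{0,1\}$ determining acceptance. Third, monotonicity forces each $f_{a,b}$ to be non-decreasing, and a non-decreasing $\{0,1\}$-valued function on $\{0,\dots,n\}$ is precisely a threshold function: there exists $q_{a,b}$ such that $f_{a,b}(k)=1$ iff $k \geq q_{a,b}$. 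This recovers a quota for each edge.

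The subtle point that will require most care is showing that the threshold is \emph{uniform} across edges, since the paper's quota rule definition fixes a single $q$ shared by all pairs $(a,b)$. The three listed axioms do not obviously entail this uniformity on their own, so I would rely on the precise framing of \cite{endriss2017graph}, where either a neutrality-type axiom across edges is also present, or ``quota rule'' is understood in the more general edge-indexed sense. In the write-up I would either reproduce the Endriss--Grandi statement verbatim or insert a brief remark reconciling the uniform versus non-uniform formulations, so that the subsequent use of the theorem---in particular the impossibility for quota rules in Proposition~\ref{quotas}, which already exhibits a single failing quota---remains airtight.
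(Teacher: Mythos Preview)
The paper does not prove this theorem at all: it is quoted verbatim from \cite{endriss2017graph} and used as a black box to derive Corollary~\ref{GraphGeneral}. There is therefore no ``paper's own proof'' to compare your attempt against.

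That said, your sketch is the standard argument and is essentially correct. The three-step distillation (independence $\Rightarrow$ decision depends only on the support set; anonymity $\Rightarrow$ only on its cardinality; monotonicity $\Rightarrow$ the resulting $\{0,1\}$-valued function is a threshold) is exactly how the characterisation is obtained in the graph-aggregation and judgment-aggregation literature.

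You are also right to flag the uniformity issue as the one genuine subtlety. In \cite{endriss2017graph} the term ``quota rule'' is used for rules with possibly \emph{edge-specific} thresholds, and anonymity, independence and monotonicity characterise precisely that class; a \emph{uniform} quota additionally requires neutrality. The present paper's Definition of a quota rule fixes a single $q$, so strictly speaking the statement as written is slightly stronger than what the three listed axioms deliver. Fortunately this does not affect the downstream use: the construction in Proposition~\ref{quotas} is symmetric across all edges of the cycle, so it defeats edge-indexed quota rules just as well as uniform ones, and Corollary~\ref{GraphGeneral} goes through either way. If you were writing this up, the cleanest fix is exactly what you propose: either adopt the edge-indexed definition of quota rule, or add neutrality to the axiom list.
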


This leads us immediately to an impossibility result:

\begin{corollary}\label{GraphGeneral}
    Any graph aggregation rule preserving being a defeat graph violates anonymity, monotonicity, or independence.
\end{corollary}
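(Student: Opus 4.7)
The plan is to derive the corollary directly from the combination of Proposition~\ref{quotas} and Theorem~\ref{quot} by a short contrapositive argument, since these two results together essentially pin down the claim. No new construction is needed: the heavy lifting (building the cyclic \textit{VAF} and showing that a quota rule collapses its defeat graphs into a non-transitive mess) has already been done in Proposition~\ref{quotas}, and the axiomatic characterisation of quota rules is imported wholesale from Theorem~\ref{quot}.

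First, I would suppose towards a contradiction that there exists a graph aggregation rule $F$ that preserves being a defeat graph and simultaneously satisfies anonymity, monotonicity, and independence. The next step is to invoke Theorem~\ref{quot}, which asserts that any rule satisfying those three axioms is a quota rule; hence $F$ must coincide with a quota rule for some quota $q \in [0,1]$. But then Proposition~\ref{quotas} applies directly to $F$: no quota rule preserves being a defeat graph, contradicting the assumption on $F$. Therefore, any rule preserving being a defeat graph must drop at least one of anonymity, monotonicity, or independence, which is exactly the statement of the corollary.

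The main obstacle is almost entirely presentational rather than mathematical, since the content of the proof is a two-line syllogism. I would want to be careful on one subtle point: Theorem~\ref{quot} as cited is phrased for general graph aggregation rules (with unrestricted domain), whereas Proposition~\ref{quotas} constructs its counterexample on a justifiable profile. Thus the contradiction is strongest when the domain of $F$ includes justifiable profiles, which is precisely the setting of interest for preservation of being a defeat graph, so the composition is clean. I would note this briefly to avoid any ambiguity about whether the axioms are being interpreted on the restricted or the unrestricted domain, and then conclude.
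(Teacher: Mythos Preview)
Your proposal is correct and essentially identical to the paper's own proof: the paper also argues that if $F$ is anonymous, monotonic and independent then by Theorem~\ref{quot} it is a quota rule, whence by Proposition~\ref{quotas} it fails to preserve being a defeat graph. Your added remark about the domain of the axioms versus the justifiable profile used in Proposition~\ref{quotas} is a reasonable clarification but not something the paper addresses.
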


\begin{proof}
    If a rule $F$ is anonymous, monotonic and independent, by Theorem \ref{quot} $F$ is a quota rule. But then, by Proposition \ref{quotas}, $F$ does not preserve being a defeat graph.
\end{proof}

By Corollary \ref{GraphGeneral} it follows that all graph aggregation rule preserving being a defeat graph must violate intuitive axioms \emph{in the general case}. However, we are especially interested in aggregating justifiable graphs. So, we would like to establish which rules restricted to justifiable inputs preserve being a defeat graph. As we will see, this is impossible for an important class of rules.

To establish this result we will employ Arrow's impossibility theorem, one of the cornerstones of social choice theory, stated here in its version for strict linear orders:

\begin{theorem}[\cite{arrow1952social}]\label{Arrow}
 Any unanimous and independent preference aggregation rule is dictatorial, when the set of alternatives has at least 3 elements.
\end{theorem}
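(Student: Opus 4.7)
The plan is to adapt the classical pivotal-voter argument of Geanakoplos. Fix any alternative $b \in V$. I would first establish an \emph{Extremal Lemma}: for every profile $\mathbf{P}$ in which every voter ranks $b$ at the very top or the very bottom of her ordering, $F(\mathbf{P})$ also places $b$ at an extreme. Suppose otherwise, so that $a \succ b$ and $b \succ c$ in $F(\mathbf{P})$ for some $a,c \neq b$. Build $\mathbf{P}'$ by swapping $a$ and $c$ in every voter's ordering while keeping $b$ at its extreme. Unanimity applied to the pair $\{c,a\}$ in $\mathbf{P}'$ forces $c \succ a$ in $F(\mathbf{P}')$; but the pairs $\{a,b\}$ and $\{b,c\}$ have not been touched, so independence preserves $a \succ b$ and $b \succ c$, contradicting transitivity.

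Next I would \emph{locate a pivot}. Consider the sequence $\mathbf{P}^0,\mathbf{P}^1,\ldots,\mathbf{P}^n$ in which $b$ sits at the bottom of every voter's ranking in $\mathbf{P}^0$, and $\mathbf{P}^k$ is obtained from $\mathbf{P}^{k-1}$ by raising $b$ to the top of voter $k$'s ranking. Unanimity places $b$ at the social bottom in $F(\mathbf{P}^0)$ and at the social top in $F(\mathbf{P}^n)$, while the Extremal Lemma keeps $b$ at one of the extremes throughout. Hence there is a smallest index $k^\star$ at which $b$ jumps from the social bottom to the social top; call voter $k^\star$ the pivot for $b$.

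The third stage upgrades the pivot to a \emph{dictator on every pair not involving $b$}. Fix $a,c \neq b$ and an arbitrary profile $\mathbf{R}$ in which voter $k^\star$ prefers $a$ to $c$. Build an auxiliary profile $\mathbf{Q}$ in which voter $k^\star$ ranks $a \succ b \succ c$ followed by everything else in any order, while each $i \neq k^\star$ keeps her $\mathbf{R}$-ranking on $V \setminus \{b\}$ but has $b$ placed at the top if $i < k^\star$ and at the bottom if $i > k^\star$. On $\{a,b\}$, $\mathbf{Q}$ has the same vote-pattern as $\mathbf{P}^{k^\star-1}$, so independence gives $a \succ b$ in $F(\mathbf{Q})$; on $\{b,c\}$, $\mathbf{Q}$ matches $\mathbf{P}^{k^\star}$, so $b \succ c$ in $F(\mathbf{Q})$; transitivity yields $a \succ c$ in $F(\mathbf{Q})$. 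Since $\mathbf{R}$ and $\mathbf{Q}$ agree voter-by-voter on $\{a,c\}$, independence finally transports $a \succ c$ into $F(\mathbf{R})$, making $k^\star$ decisive on $\{a,c\}$. To extend the dictatorship to pairs containing $b$, I would repeat Stages 1--2 with a different reference alternative $b' \neq b$, obtaining a pivot $k'$ decisive on every pair not containing $b'$; a standard cycle-avoidance argument (pair up the three decisiveness claims one obtains by varying the reference alternative, and rule out the Condorcet-type cycle) forces $k'=k^\star$, so $k^\star$ dictates every pair.

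The main obstacle I anticipate is the delicate bookkeeping in Stage 3: the auxiliary profile $\mathbf{Q}$ must simultaneously coincide with $\mathbf{P}^{k^\star-1}$ on $\{a,b\}$ and with $\mathbf{P}^{k^\star}$ on $\{b,c\}$, and this is feasible only because voter $k^\star$ can place $b$ strictly between $a$ and $c$ in her ordering. It is precisely here that the hypothesis $|V| \geq 3$ is consumed, and any attempted shortcut must respect this pinch point.
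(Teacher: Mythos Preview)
The paper does not prove this statement at all: Theorem~\ref{Arrow} is simply quoted as Arrow's classical result and then invoked as a black box in the proofs of Theorem~4 and Proposition~\ref{VafIndep}. There is therefore no ``paper's own proof'' to compare against; your proposal supplies a proof where the paper deliberately declines to give one.

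On its own merits, your sketch is the standard Geanakoplos pivotal-voter argument and is essentially sound. One slip worth flagging: in the Extremal Lemma you say you ``swap $a$ and $c$ in every voter's ordering'' and then invoke unanimity on $\{c,a\}$. Swapping does not produce unanimity --- a voter who originally had $c$ above $a$ would now have $a$ above $c$. The correct move is to \emph{raise $c$ just above $a$} in every voter's ordering; this makes $c \succ a$ unanimous while leaving each voter's $\{a,b\}$ and $\{b,c\}$ comparisons intact (precisely because $b$ sits at an extreme), and the contradiction with transitivity follows as you wrote. The final step, showing that the pivots obtained from different reference alternatives coincide, is a little hand-wavy as stated (``a standard cycle-avoidance argument''); the clean way is to observe that the pivot $k'$ for $b'$ dictates on some pair on which $k^\star$'s switch in the $b$-sequence visibly flips the social ranking, forcing $k' = k^\star$. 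With these two clarifications the argument is complete.
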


We can now show the following:

\begin{theorem}
Any graph aggregation rule restricted to justifiable profiles and preserving being a defeat graph is not independent, not unanimous, or is dictatorial. 
\end{theorem}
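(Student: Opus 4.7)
The plan is to reduce the statement to Arrow's impossibility theorem (Theorem~\ref{Arrow}), by exhibiting, from any graph aggregation rule $F$ on justifiable profiles that preserves being a defeat graph and is both unanimous and independent, an induced preference aggregation rule $G$ over values that inherits unanimity and independence, and hence (by Arrow) a dictator.

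Assuming $|A| \geq 3$, I first construct a ``test'' value-based argumentation framework $\mathcal{V}^* = \langle A, \rightarrow^*, A, \mathit{id}\rangle$ in which each argument carries its own unique value (so $V^* = A$ and $val^*$ is the identity) and $\rightarrow^* = \{(a,b) \in A^2 : a \neq b\}$ is the complete irreflexive digraph. A direct check shows that for any audience $P$ the defeat graph of $\mathcal{V}^*$ induced by $P$ is exactly the strict tournament $\{(a,b) : a \succ_P b\}$, so the map sending audiences over $V^*$ to defeat graphs of $\mathcal{V}^*$ is a bijection onto the set of transitive tournaments on $A$.

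Using this bijection I define $G$: given $\mathbf{P} = (P_1, \ldots, P_n)$, let $AF_i$ be the defeat graph of $\mathcal{V}^*$ induced by $P_i$; the resulting profile $\mathbf{AF}$ is justifiable (witnessed by $\mathcal{V}^*$), so by preservation $F(\mathbf{AF})$ is again a defeat graph of $\mathcal{V}^*$ and corresponds to a unique audience, which I declare to be $G(\mathbf{P})$. That $G$ inherits unanimity and independence from $F$ is routine: if every $P_i$ ranks $v_a$ above $v_b$ then $(a,b)$ lies in every $AF_i$, so by unanimity of $F$ it lies in $F(\mathbf{AF})$; and the set of agents ranking $v_a$ above $v_b$ coincides with $N_{a \rightarrow b}^{\mathbf{AF}}$, so $F$'s independence transfers directly to $G$. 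Since $|V^*| \geq 3$, Arrow's theorem supplies a dictator $i^*$ for $G$.

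The main obstacle is the final step, lifting this ``local'' dictatorship on $\mathcal{V}^*$-profiles to a genuine dictatorship of $F$ over all justifiable profiles. For any distinct $a, b \in A$ and any coalition $S \subseteq \mathcal{N}$, one can exhibit a $\mathcal{V}^*$-profile realising $N_{a \rightarrow b}^{\mathbf{AF}} = S$ by choosing audiences that order $a$ above $b$ precisely for the agents in $S$; combined with the dictatorship of $i^*$, this shows that $(a,b) \in F(\mathbf{AF})$ iff $i^* \in S$. Invoking $F$'s independence then pins this down for every justifiable profile with that support pattern. Self-loops require only a short separate treatment, because justifiability forces $N_{a \rightarrow a}^{\mathbf{AF}}$ to be uniformly $\mathcal{N}$ or $\emptyset$ across any single profile, and in both extremes preservation and unanimity align $F$ with the dictator. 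We conclude that $F(\mathbf{AF}) = AF_{i^*}$ on every justifiable profile, contradicting the assumption of non-dictatorship.
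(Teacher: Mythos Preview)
Your approach is essentially the paper's: build a VAF on which defeat graphs are in bijection with linear orders, pull back $F$ to a preference aggregation rule, check that unanimity and independence transfer, and invoke Arrow. The paper uses three values rather than your $|A|$ many, but this is immaterial.

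Where you genuinely add something is the final lifting step. The paper asserts, without further argument, that a dictator for the induced preference rule $F'$ forces $F$ itself to be dictatorial on \emph{all} justifiable profiles; but $F'$ only witnesses $F$'s behaviour on profiles arising from the test VAF, and other justifiable profiles (for instance those induced by a VAF with a sparser attack relation, or with self-loops) are not of this form. Your explicit use of independence---realising every coalition $S$ as $N^{a\to b}_{\mathbf{AF}}$ on some $\mathcal{V}^*$-profile and then transporting the verdict to an arbitrary justifiable profile with the same support---closes exactly this gap, and your separate handling of self-loops (which the test VAF lacks) is likewise something the paper omits. So your argument follows the same route but is strictly more complete.
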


\begin{proof}
Take a graph aggregation rule which preserves being a defeat graph. Also, suppose towards contradiction that it is unanimous, independent and is not dictatorial. Now, consider a \textit{VAF}=$\langle A, \rightarrow, V, val \rangle$ such that $V=\{v_1,v_2,v_3\}$, $A= \{a_{v_i}|v_i \in V \}$, $val(a_{v_i})=v_i$ and $\rightarrow = \{a \rightarrow b| a, b \in A \textnormal{ and } a \neq b \}$. Now consider an argumentation framework \textit{AF} which is a defeat graph of \textit{VAF}. Note, that it corresponds to a unique preference ordering over $V$. We can thus encode any profile \textbf{P} of preference orderings over $V$ as a profile of defeat graphs \textbf{AF}$_{\textbf{P}}$ of \textit{VAF}. 
Define now a preference aggregation rule $F': \mathcal{P}^m \rightarrow \mathcal{P}$, such that  $v_i \succ v_j \in F'(\textbf{P})$ iff $a_{v_i} \rightarrow_{F(\textbf{AF}_{\textbf{P}})} a_{v_j}$. Notice that $F'(\textbf{P})$ is a linear ordering: as $F(\textbf{AF})$ is a defeat graph of \textit{VAF}, we have that $F'(\textbf{P})$ is connected since for every pair $\langle v_i, v_j\rangle \in V^2$ there is an attack $a_{v_i} \rightarrow a_{v_j}$. It is also anti-symmetric, as for every pair of arguments $a_{v_i}, a_{v_j}$ we need to have that either $a_{v_i} \rightarrow_{F(\textbf{AF})} a_{v_j}$ or $a_{v_j} \rightarrow_{F(\textbf{AF})} a_{v_i}$ while it is not the case that $a_{v_i} \rightarrow_{F(\textbf{AF})} a_{v_j}$ and $a_{v_j} \rightarrow_{F(\textbf{AF})} a_{v_i}$. Otherwise it would be impossible to find a symmetric preference ordering \textbf{P'} such that $F(\textbf{AF})$ is a defeat graph of \textit{VAF} induced by \textbf{P'}. By a similar argument we can show that  we cannot have that $a_{v_i} \rightarrow_{F(\textbf{AF}} a_{v_j}$, $a_{v_j} \rightarrow_{F(\textbf{AF})} a_{v_k}$ and $a_{v_k} \rightarrow_{F(\textbf{AF})} a_{v_i}$. From this it follows that $F'(\textbf{P})$ is transitive.

We need to demonstrate that if $F$ is unanimous, independent and non-dictatorial, then so is $F'$. If $F$ is unanimous: suppose that $F'$ is not and take a profile of preference orderings \textbf{P} such that for some pair of values $v_i, v_j$, for every $P_k \in \textbf{P}$ $v_i \succ_k v_j$ but $v_j \succ_{F'(\textbf{P})}$. Then, by construction of $F$ we would have that for every member of the profile \textbf{AF} of graphs induced by \textbf{P} $a_{v_i} \rightarrow a_{v_j}$, but it would not be the case in $F(\textbf{AF})$, which contradicts the assumptions. If $F$ is independent, suppose that $F'$ is not and take profiles of orderings \textbf{P}, \textbf{P'} over $V$ such that for some $v_i, v_j$ we have that $v_i \succ v_j \in F'(\textbf{P})$ while  $v_j \succ v_i \in F'(\textbf{P'})$ even though $v_i \succ v_j$ is supported by the same voters in both profiles. Then, by construction of $F$ we would have that for the pair of profiles of graphs \textbf{AF}, \textbf{AF'} induced by \textbf{P}, \textbf{P'} $a_{v_i} \rightarrow_{F(\textbf{AF})} a_{v_j}$ but  $a_{v_j} \rightarrow_{F(\textbf{AF'})} a_{v_i}$, so $F$ wouldn't be independent.  Finally, if $F'$ would be dictatorial,  by construction of $F$ we would have that for some $i$, in any profile \textbf{AF}, $F(\textbf{AF})=\textit{AF}_i$, so by contraposition if $F$ is non-dictorial, so is $F'$. But this means that by Theorem \ref{Arrow} we cannot have independent,  unanimous and non-dictatorial rules restricted to justifiable profiles which preserve being a defeat graph. 

\end{proof}

This result shows that 
it is not possible to find any rule satisfying all the considered desiderata. However, we can still consider natural rules preserving being a defeat graph.
Note that any \emph{representative-voter} rule satisfies this property. These are the rules which always select some graph represented in the input profile (see e.g., \cite{EndrissGrandiAAAI2014}).  Another intuitive move would be to select an argumentation framework which is a defeat graph of a \textit{VAF} justifying the input profile and minimizing a distance from the submitted graphs, in line with techniques from belief merging cited in the related work section. This approach is however not guaranteed to work, and investigating this issue is an interesting direction for future work, as it might be the case that the output of such a rule is not a defeat graph of a second \textit{VAF'} which also happen to justify the input profile.

\subsection{Preference aggregation}

Let us now proceed to the study of properties of preference aggregation based mechanisms. These mechanisms have a major advantage over graph aggregation: they always produce justifiable outputs. 
However, the outcome of the preference aggregation based mechanism is dependent on the context of \textit{VAF}. By the context of a \textit{VAF} we mean the assignment of values to arguments.
Indeed, it is worth noting that a preference aggregation mechanism does not always ensure that if two profiles of preference orderings induce the same profile of defeat graphs, they will produce the same collective defeat graph. They might, in principle, provide different graphs depending on the context of \textit{VAF}. As we will see further, this is the case for all reasonable preference aggregation based mechanisms.

Formally, we will be looking for preference aggregation rules corresponding to some graph aggregation rule.

\begin{definition}[Corresponding graph aggregation]
    Let $F$ be a preference aggregation rule. A graph aggregation function $F'$ corresponds to $F$ if for every profile of justifiable argumentation frameworks, \textit{VAF}=$\langle A, \rightarrow, V, val \rangle$ and a profile of preference orderings \textbf{P} justifying \textbf{AF} with respect to \textit{VAF}, the defeat graph $\langle A, \rightarrow_{F(\textbf{P})}\rangle=F'(\textbf{AF})$. 
\end{definition}

A necessary condition for the existence of a graph aggregation rule corresponding to a preference aggregation rule $F$, is that given a \textit{VAF} and a profile \textbf{AF} of its defeat graphs, application of $F$ will result in the same collective defeat graph, no matter which profile of preference orderings justifying \textbf{AF} is chosen. Formally, we will refer to this property as to \emph{interpretation independence}.

\begin{definition}[Interpretation Independence]
A preference aggregation rule $F$ is \emph{interpretation independent} if for every \textit{VAF} and profile \textbf{AF} of defeat graphs of \textit{VAF}, we have that for every pair of profiles \textbf{P}, \textbf{P'} justifying \textbf{AF} with respect to \textit{VAF}, $\langle A, \rightarrow_{F(\textbf{P})} \rangle = \langle A, \rightarrow_{F(\textbf{P'})} \rangle$.
\end{definition}

The aforementioned fact follows from the following lemma:

\begin{lemma}\label{corr}
If there is a graph aggregation rule corresponding to a preference aggregation rule $F$, then $F$ is interpretation independent.
\end{lemma}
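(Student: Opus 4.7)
The plan is a direct unpacking of the two definitions: correspondence forces a single output graph for any profile \textbf{AF} of defeat graphs, independently of which underlying preference profile we chose to recover it from.

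First, I would assume the existence of a graph aggregation rule $F'$ corresponding to the preference aggregation rule $F$ in the sense of the preceding definition. Then, to check interpretation independence, I would fix an arbitrary \textit{VAF}, an arbitrary profile \textbf{AF} of defeat graphs of this \textit{VAF}, and two profiles $\textbf{P}$ and $\textbf{P}'$ of preference orderings over $V$ that both justify \textbf{AF} with respect to this \textit{VAF} (i.e., for each agent $i$, applying either $P_i$ or $P'_i$ to the \textit{VAF} yields the same defeat graph $\textit{AF}_i$).

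Next, I would invoke the defining property of correspondence twice. Applied to the profile $\textbf{P}$, it gives $\langle A, \rightarrow_{F(\textbf{P})}\rangle = F'(\textbf{AF})$; applied to $\textbf{P}'$, which justifies the same input profile \textbf{AF}, it gives $\langle A, \rightarrow_{F(\textbf{P}')}\rangle = F'(\textbf{AF})$. The right-hand sides coincide because $F'$ is a function of \textbf{AF} alone, so the left-hand sides coincide as well, which is exactly the conclusion $\langle A, \rightarrow_{F(\textbf{P})}\rangle = \langle A, \rightarrow_{F(\textbf{P}')}\rangle$ demanded by interpretation independence.

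There is really no substantial obstacle here: the lemma amounts to observing that a function cannot send the same input to two different outputs, so the genuine content lies in the definitions rather than in the proof. The only mild subtlety worth flagging explicitly in the write-up is that the correspondence property is stated for \emph{any} preference profile justifying the given \textbf{AF}, so we are entitled to apply it to $\textbf{P}$ and $\textbf{P}'$ separately without any further assumption. I would make this explicit to avoid any reader wondering whether correspondence only fixes the behaviour of $F$ on one distinguished justification of \textbf{AF}.
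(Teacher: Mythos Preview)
Your proof is correct and essentially the same as the paper's: both reduce the lemma to the observation that a function cannot assign two different outputs to the same input \textbf{AF}. The only cosmetic difference is that the paper argues by contraposition (assuming $F$ is not interpretation independent and deriving that no corresponding $F'$ can exist), whereas you proceed directly; the logical content is identical.
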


\begin{proof}
Proof by transposition. Take a preference aggregation rule $F$ which is not interpretation independent. Then take a profile \textbf{AF} of defeat graphs of a \textit{VAF} such that for two profiles  of preference orderings \textbf{P}, \textbf{P'}, $F$ produces different outcomes (\textit{AF}, \textit{AF'}). Then note that any graph aggregation function corresponding to $F$ would need to   have both \textit{AF} and \textit{AF'} as the output for \textbf{AF}, which is not possible.
\end{proof}

However, interpretation independence only holds if $F$ is independent.
\begin{proposition} \label{InterpretationIndependence}
A preference aggregation rule $F$ is interpretation independent iff $F$ is independent.
\end{proposition}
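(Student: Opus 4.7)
The plan is to prove both directions separately, both of which go through rather directly once the right objects are chosen. The hard part is the forward direction, where one has to manufacture a VAF whose defeat graphs expose exactly the pairwise information used in the independence axiom.

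For the direction ``interpretation independent $\Rightarrow$ independent'', I would argue by contrapositive. Suppose $F$ fails independence, witnessed by two profiles $\textbf{P},\textbf{P}'$ over $V$ and a pair $v_1,v_2\in V$ with $N_{\textbf{P}}^{v_1 \succ v_2}=N_{\textbf{P}'}^{v_1 \succ v_2}$, while (say) $v_1 \succ_{F(\textbf{P})} v_2$ but $v_2 \succ_{F(\textbf{P}')} v_1$. Construct a VAF $\langle A,\rightarrow,V,val\rangle$ on two arguments $A=\{a_1,a_2\}$ with $val(a_i)=v_i$ and $\rightarrow=\{(a_1,a_2),(a_2,a_1)\}$ (keeping the full set $V$ from the profiles so the audiences live in the right space). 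For any audience $P$ over $V$, the induced defeat graph of this VAF records exactly the relative order of $v_1$ and $v_2$ in $P$. Hence the equality $N_{\textbf{P}}^{v_1 \succ v_2}=N_{\textbf{P}'}^{v_1 \succ v_2}$ ensures that $\textbf{P}$ and $\textbf{P}'$ induce the \emph{same} profile $\textbf{AF}$ of defeat graphs, so both justify $\textbf{AF}$. But the collective defeat graphs $\langle A,\rightarrow_{F(\textbf{P})}\rangle$ and $\langle A,\rightarrow_{F(\textbf{P}')}\rangle$ disagree on the edge between $a_1$ and $a_2$, contradicting interpretation independence.

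For the direction ``independent $\Rightarrow$ interpretation independent'', fix a VAF $\langle A,\rightarrow,V,val\rangle$, a profile $\textbf{AF}$ of its defeat graphs, and two profiles $\textbf{P},\textbf{P}'$ both justifying $\textbf{AF}$. I need to show $\langle A,\rightarrow_{F(\textbf{P})}\rangle=\langle A,\rightarrow_{F(\textbf{P}')}\rangle$, so fix an attack $a\rightarrow b$ in the VAF. If $val(a)=val(b)$, the attack survives in every defeat graph and there is nothing to check. Otherwise, the definition of the defeat relation gives, for each voter $i$, the equivalence $a\rightarrow_i b \Leftrightarrow val(a)\succ_{P_i} val(b)$, and similarly with $P_i'$. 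Since $P_i$ and $P_i'$ both justify the same $AF_i$, these orderings agree on the pair $(val(a),val(b))$ for every $i$, which yields $N_{\textbf{P}}^{val(a)\succ val(b)}=N_{\textbf{P}'}^{val(a)\succ val(b)}$. Applying independence of $F$ gives $val(a)\succ_{F(\textbf{P})} val(b) \Leftrightarrow val(a)\succ_{F(\textbf{P}')} val(b)$, which in turn gives $a\rightarrow_{F(\textbf{P})} b \Leftrightarrow a\rightarrow_{F(\textbf{P}')} b$, as required.

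The main obstacle I anticipate is ensuring, in the forward direction, that the two-argument VAF gadget genuinely lives in the same ``preference space'' as the profiles witnessing failure of independence; keeping $V$ as the full value set of the profiles (rather than just $\{v_1,v_2\}$) and observing that $val$ need not be surjective handles this cleanly. The backward direction is essentially a pairwise reading of the definition of ``defeat'' combined with independence, with the only subtlety being the harmless special case where attacker and attacked share a value.
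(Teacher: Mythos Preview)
Your proposal is correct and follows essentially the same approach as the paper. The only cosmetic differences are that in the forward direction the paper's gadget carries one argument per value (all but $a_{v_1},a_{v_2}$ isolated) while you use just two arguments with a non-surjective $val$, and in the backward direction the paper argues by contradiction whereas you argue directly; neither difference is substantive.
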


\begin{proof}
	$(\Leftarrow )$ Consider any  $\textit{VAF} = \langle A, \rightarrow, V, val \rangle$, as well as a profile of its defeat graphs $\bf{AF}$. Also, let $F$ be any independent preference aggregation rule. Further, take any pair of profiles $\textbf{P}, \textbf{P'}$ of preference orderings over $V$ inducing $\bf{AF}$. Now suppose that $\langle A, \rightarrow^{F(\textbf{P})} \rangle \neq  \langle A, \rightarrow^{F(\textbf{P'})} \rangle$. Without loss of generality assume that there is an attack $a \rightarrow b$ such that $a \rightarrow b \in \langle A, \rightarrow^{F(\textbf{P})} \rangle$ but $a \rightarrow b \notin \langle A, \rightarrow^{F(\textbf{P'})} \rangle$. Then, by connectedness we know that $val(a) \succ val(b) \in F(\textbf{P})$. Otherwise we would have that $val(b) \succ val(a) \in F(\bf{P})$, so the attack would be blocked. Then, take the set of voters $N_{\textbf{P}}^{val(a) \succ val(b)}$. Note that they must correspond to defeat graphs in which $a \rightarrow b$ is included. Other defeat graphs can only be justified with orderings in which $val(b) \succ val(a)$ and, by connectedness requirement, preservation of $a \rightarrow b$ needs to be justified with an ordering in which $val(a) \succ val(b)$. So, $N_{\textbf{P}}^{val(a)\succ val(b)}$ is also the set of supporters of $val(a) \succ val(b)$ in $\textbf{P'}$. So, by independence,  $val(a) \succ val(b) \in F(\textbf{P'})$. So, $a \rightarrow b \in \langle A, \rightarrow^{F(\textbf{P'})} \rangle$. Contradiction.
	
	$(\Rightarrow)$ Proof by transposition. Take a preference aggregation rule $F$ which is not independent. Let us show that there is some \textit{VAF} and a profile of its defeat graphs such that for two distinct profiles \textbf{P}$_1$, \textbf{P}$_2$ of preference orderings justifying it, the defeat graphs induced by $F(\textbf{P}_1)$ and $F(\textbf{P}_2)$ are not equal. We know that there is a pair of values $v_1, v_2$ and a pair of profiles of preference orderings $\textbf{P}_1, \textbf{P}_2$ such that $N^{v_1 \succ v_2}_{\textbf{P}_1} = N^{v_1 \succ v_2}_{\textbf{P}_2}$ but $v_1 \succ v_2 \in F(\textbf{P}_1)$ while $v_1 \succ v_2 \notin F(\textbf{P}_2)$. Take these profiles and construct a \textit{VAF}=$\langle A, \rightarrow, V, val \rangle$ such that $V$ is the set of values ordered by $\textbf{P}_1$ and $\textbf{P}_2$, $A= \{a_{v_k}| v_k \in V \}$, $\rightarrow = \{ a_{v_1} \rightarrow a_{v_2},  a_{v_2} \rightarrow a_{v_1} \}$ and for any $val(a_{v_k})=v_k$. Now, consider a profile \textbf{AF} of defeat graphs of \textit{VAF} induced by \textbf{P}$_1$. Note that both \textbf{P}$_1$ and \textbf{P}$_2$ justify \textbf{AF} since $N^{v_1 \succ v_2}_{\textbf{P}_1} = N^{v_1 \succ v_2}_{\textbf{P}_2}$. We know, however, that  $v_1 \succ v_2 \in F(\textbf{P}_1)$ while $v_1 \succ v_2 \notin F(\textbf{P}_2)$. Thus, we have that in the defeat graph induced by $F(\textbf{P}_2)$, $a_{v_2} \rightarrow a_{v_1}$, which is not the case in the graph induced by  $F(\textbf{P}_1)$ .
\end{proof}

It turns out that as a consequence of the Arrow's impossibility theorem, we can only guarantee that a preference aggregation rule is Interpretation Independent if it is not unanimous or is dictatorial.

\begin{theorem}\label{PrefBad}
The only unanimous preference aggregation rule corresponding to a graph aggregation function is a dictatorship.
\end{theorem}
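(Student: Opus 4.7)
The plan is to chain together the three preceding results: Lemma~\ref{corr}, Proposition~\ref{InterpretationIndependence}, and Arrow's theorem (Theorem~\ref{Arrow}). First I would assume, for the sake of the argument, that $F$ is a unanimous preference aggregation rule that corresponds to some graph aggregation rule $F'$. The goal is to conclude that $F$ must be dictatorial.

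The first step is to invoke Lemma~\ref{corr}: since by assumption there exists a graph aggregation rule $F'$ corresponding to $F$, we immediately obtain that $F$ is interpretation independent. The second step is to apply Proposition~\ref{InterpretationIndependence} in the $(\Rightarrow)$ direction (or, equivalently, note that interpretation independence is equivalent to independence), which upgrades this into the conclusion that $F$ is independent in the usual Arrovian sense.

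At this point $F$ is both unanimous and independent on the set of values $V$. Assuming $|V| \geq 3$, Arrow's theorem (Theorem~\ref{Arrow}) then directly yields that $F$ is a dictatorship, completing the argument. I would briefly flag the assumption $|V| \geq 3$: for $|V| \leq 2$ the statement is vacuous in the sense that Arrow's theorem does not bite, but the natural reading of ``value-based argumentation'' presumes at least three values in play (as is consistent with the running example).

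The main obstacle, as far as there is one, is conceptual rather than technical: one must be careful that the two notions of ``independence'' (interpretation independence, which speaks about pairs of preference profiles producing the same defeat graph profile, and classical independence of irrelevant alternatives) genuinely coincide, but this is precisely the content of Proposition~\ref{InterpretationIndependence}, so no further work is needed. The whole proof is therefore a short syllogism, of the form Lemma~\ref{corr} $\Rightarrow$ interpretation independence $\Rightarrow$ independence $\Rightarrow$ dictatorship, with unanimity imported as a hypothesis.
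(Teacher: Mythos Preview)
Your proposal is correct and follows essentially the same argument as the paper's proof: the paper also chains Lemma~\ref{corr}, Proposition~\ref{InterpretationIndependence}, and Arrow's theorem (Theorem~\ref{Arrow}) to reach the conclusion. Your explicit flagging of the $|V|\geq 3$ assumption is a welcome clarification that the paper leaves implicit.
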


\begin{proof}
    From Proposition \ref{InterpretationIndependence} and Theorem \ref{Arrow} we have immediately that any unanimous, interpretation independent preference aggregation rule is dictatorial. Then, by Lemma \ref{corr} we get that the only unanimous preference aggregation rule corresponding to a graph aggregation function is a dictatorship.
\end{proof}

On a more positive side, let us highlight two situations where preference aggregation rules are always interpretation independent.
The first such situation is when 
a given \textit{VAF} only admits defeat graphs justifiable by a \emph{unique} preference ordering. The following proposition characterizes such \textit{VAF}s.

\begin{proposition}\label{unique}
 For every \textit{VAF}$=\langle A, \rightarrow, V, val \rangle$ we have that for every defeat graph \textit{AF} of \textit{VAF}, \textit{AF} is justifiable by a unique preference ordering iff for every $v_1, v_2 \in V$ there is a pair of arguments $a_1, a_2$ such that $val(a_1)=v_1$, $val(a_2)=v_2$ and $a_1 \rightarrow a_2$ or $a_2 \rightarrow a_1$. 
\end{proposition}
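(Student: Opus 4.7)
My plan is to prove the two directions of the biconditional separately, using a direct argument for the sufficiency direction and a contrapositive argument for the necessity direction. The key intuition in both directions is that the relative position of two values in an audience is pinned down by the status (present or blocked) of attacks between arguments bearing those values.

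For $(\Leftarrow)$, I fix any defeat graph $\textit{AF}$ of the $\textit{VAF}$ and suppose two audiences $P, P'$ both induce $\textit{AF}$. For each pair $v_1 \neq v_2 \in V$, I invoke the assumption to get an attack, say $a_1 \rightarrow a_2$ with $val(a_1)=v_1$ and $val(a_2)=v_2$. This attack is either present in $\textit{AF}$ or blocked: if present, then by the defeat definition no justifying audience can have $v_2 \succ v_1$, so both $P$ and $P'$ must place $v_1 \succ v_2$; if blocked, then both must place $v_2 \succ v_1$. Since the pair was arbitrary and both audiences are linear, $P = P'$.

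For $(\Rightarrow)$, I argue by contraposition: assuming there exist $v_1, v_2 \in V$ with no attack in the underlying $\textit{VAF}$ between any argument with value $v_1$ and any argument with value $v_2$, I construct a defeat graph with at least two justifying audiences. I choose any audience $P$ in which $v_1$ and $v_2$ occupy adjacent positions, say $v_1$ immediately above $v_2$, and define $P'$ by swapping just these two values. Because of adjacency, for every other value $w \in V \setminus \{v_1, v_2\}$ the relative order of $w$ with $v_1$ and with $v_2$ is the same in $P$ and in $P'$. I then do a case analysis on each attack $a \rightarrow b$ of the $\textit{VAF}$: either both endpoints have values outside $\{v_1, v_2\}$, or exactly one endpoint has value in $\{v_1, v_2\}$ (the case of one endpoint having value $v_1$ and the other $v_2$ is excluded by hypothesis). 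In every admissible case, the ordering restricted to $\{val(a), val(b)\}$ coincides in $P$ and $P'$, so the attack survives under $P$ iff it survives under $P'$. Hence $P$ and $P'$ induce the same defeat graph while being distinct, giving the desired non-uniqueness.

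The main obstacle I anticipate is making the bookkeeping in the second direction airtight: I must be confident that swapping two adjacent values alters the order of no pair relevant to any attack of the $\textit{VAF}$. Adjacency of $v_1$ and $v_2$ is precisely what rules out a third value $w$ lying between them and therefore having its relative order with $v_1$ or $v_2$ affected by the swap, while the hypothesis on the absence of cross-value attacks disposes of the one remaining case. With these two observations in place the case analysis is routine and the proposition follows.
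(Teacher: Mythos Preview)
Your proof is correct and follows the same approach as the paper's; your $(\Rightarrow)$ construction (swap two adjacent values) is in fact a slight generalisation of the paper's, which places $v_1$ and $v_2$ at the bottom two positions---a particular instance of adjacency. One minor bookkeeping remark: your case analysis in $(\Rightarrow)$ should also cover attacks $a \rightarrow b$ with $val(a) = val(b) \in \{v_1, v_2\}$, though this case is immediate since defeats between same-valued arguments are audience-independent (the paper's proof glosses over the same case).
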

\begin{proof}
$(\Rightarrow)$ Proof by transposition. Take a \textit{VAF}$=\langle A, \rightarrow, V, val \rangle$ such that for a pair of values $v_1, v_2 \in V$, for any $a_1, a_2 \in A$ if $val(a_1)=v_1$ and $val(a_2)=v_2$, $a_1 \not\rightarrow a_2$ and $a_2 \not\rightarrow a_1$. We will show that there are $\succ, \succ'$ such that $\langle A, \rightarrow_{\succ} \rangle = \langle A, \rightarrow_{\succ'} \rangle$. Let for every $v_1, v_2 \in V$ such that $v_1, v_2 \notin \{v_i, v_j \}$, $v_1 \succ v_2$ iff $v_1 \succ' v_2$. Also, let for every $v_1 \notin \{v_i, v_j \}$, $v_1 \succ v_i$, $v_1 \succ v_j$, $v_1 \succ' v_i$ and $v_1 \succ' v_j$. Now take any attack $a \rightarrow b$. Note that by construction of $\succ, \succ'$, if $val(a) \notin \{v_i, v_j \}$ or $val(b) \notin \{v_i, v_j \}$, $a \rightarrow_{\succ} b$ iff $a \rightarrow_{\succ'} b$. But if it is not the case, $a \not\rightarrow b$. So, $\langle A, \rightarrow_{\succ} \rangle = \langle A, \rightarrow_{\succ'} \rangle$.

$(\Leftarrow)$ Take a \textit{VAF}$=\langle A, \rightarrow, V, val \rangle$  such that for every $v_1, v_2 \in V$ there is a pair of arguments $a_1, a_2$ such that $val(a_1)=v_1$, $val(a_2)=v_2$ and $a_1 \rightarrow a_2$ or $a_2 \rightarrow a_1$. Then suppose that there are two preference orderings $\succ, \succ'$ such that $\langle A, \rightarrow_{\succ} \rangle \neq \langle A, \rightarrow_{\succ'} \rangle$. Consider a pair of values $v_1, v_2$ such that $v_1 \succ v_2$ but $v_2 \succ v_1$. W.l.o.g assume that there is a pair of arguments $a_1, a_2$ such that $val(a_1)=v_1$, $val(a_2)=v_2$ and $a_1 \rightarrow a_2$. But we know that $a_1 \rightarrow_{\succ} a_2$ iff $a_1 \rightarrow_{\succ'} a_2$, and hence $v_1 \succ v_2$ iff $v_1 \succ' v_2$, which contradicts the assumptions.
\end{proof}

Another situation guaranteeing interpretation independence is when a \textit{VAF} has not more than two values, as shown in the following proposition.

\begin{proposition}
 For every $\textit{VAF}=\{A, \rightarrow, V, val \}$ such that $|V| \leq 2$, either all defeat graphs are justified by a unique preference ordering, or all defeat graphs of \textit{VAF} are equal.
\end{proposition}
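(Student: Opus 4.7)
The plan is to reduce the statement to a simple case analysis on the size of $V$ and to apply Proposition~\ref{unique} wherever cross-value attacks are present. If $|V|\leq 1$, there is only one linear order over $V$ available in the first place, so every defeat graph is trivially justified by a unique preference ordering and the first disjunct holds. The interesting case is $|V|=2$, say $V=\{v_1,v_2\}$, where exactly two linear orderings exist, namely $v_1\succ v_2$ and $v_2\succ v_1$.

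For $|V|=2$ I would split further. Either there exist arguments $a_1,a_2\in A$ with $val(a_1)=v_1$, $val(a_2)=v_2$ and $a_1\rightarrow a_2$ or $a_2\rightarrow a_1$, or there exist no such cross-value attacks at all. In the first subcase, the characterization provided by Proposition~\ref{unique} applies directly and gives us that every defeat graph of \textit{VAF} is justifiable by a unique preference ordering, i.e.\ the first disjunct of the proposition.

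In the second subcase no attack spans the two values, so every attack $a\rightarrow b$ in \textit{VAF} satisfies $val(a)=val(b)$. Here I would use the definition of the defeat relation: if $val(a)=val(b)$ then for any audience $P$ it is never the case that $val(b)\succ_P val(a)$, so $a\rightarrow_P b$ holds iff $a\rightarrow b$ holds. Consequently, every audience over $V$ induces exactly the same defeat relation, namely $\rightarrow$ itself, and so all defeat graphs of \textit{VAF} coincide, giving the second disjunct.

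I do not anticipate a genuine obstacle: the whole argument is a straightforward case split whose only non-trivial ingredient is Proposition~\ref{unique}, which has already been established. The only point that requires care is to make sure the edge case $|V|\leq 1$ is not forgotten and to be explicit about why same-value attacks are automatically preserved as defeats, since this relies on the strictness of $\succ_P$ rather than on a specific ordering of $V$.
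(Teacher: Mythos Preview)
Your proposal is correct and follows essentially the same approach as the paper: a case split on whether some attack crosses the two values, invoking Proposition~\ref{unique} when one does and falling back on the definition of defeat when none does. The only cosmetic difference is that for $|V|\leq 1$ you conclude via the first disjunct (unique ordering) while the paper concludes via the second (all defeat graphs equal); both are valid since in that degenerate case both disjuncts hold.
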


\begin{proof}
    Consider any such $\textit{VAF}=\{A, \rightarrow, V, val \}$ and let $V=\{v_1, v_2\}$ (note that if $V$ is a singleton, all defeat graphs of \textit{VAF} are equal by definition of a defeat graph). Suppose that for some pair of arguments $a_1, a_2$ such that $val(a_1)=v_1$ and $val(a_2)=v_2$, $a_1 \rightarrow a_2$ or $a_2 \rightarrow a_1$. Then, by Proposition \ref{unique} we have that all defeat graphs of \textit{VAF} are justified by a unique preference ordering. Also, if this is not the case, for all attacks $a \rightarrow b$ in $\rightarrow$, $val(a)=val(b)$. So, by definition of a defeat graph, for any preference ordering $\succ$, $\rightarrow = \rightarrow_{\succ}$.
\end{proof}

\section{Retrieved orderings}

Note that both of the approaches considered so far have some drawbacks. On the one hand, 
we have shown that any graph aggregation function satisfying desirable properties cannot ensure that a \textit{VAF} justifying a profile of graphs justifies also the collective graph. On the other hand, obtaining a collective defeat graph using preference aggregation functions requires more information than the employment of graph aggregation and we cannot ensure its correspondence to any graph aggregation rule. In this section we explore an aggregation method which combines the two approaches. Our goal will be to provide a procedure which both makes sure that the collective argumentation framework is always a defeat graph of the same \textit{VAF} as the input graphs, and that it corresponds to some graph aggregation rule.

To this purpose we consider the following mechanism. First, a profile of defeat graphs on a given \textit{VAF} is submitted by the agents. Then, a profile of preference orderings justifying the profile is derived (multiple choices are possible). Further, these preferences are aggregated using a suitable preference aggregation function, which in turns induces a collective defeat graph on the initial \textit{VAF}.

The considered mechanism is illustrated in Figure~\ref{scheme}. 
The lower tier of the picture represents a profile of defeat graphs of a certain \textit{VAF}. 
Each $\succ_i$ is a preference ordering inducing $\textit{AF}_i$, and is aggregated into $\succ_{coll}$ by means of a preference aggregation function $F$. This step is depicted as an arrow in the upper tier. The collective defeat graph $\textit{AF}_{coll}$ is then induced by \textit{VAF} using $\succ_{ coll}$.

\begin{figure}[H] 
\centering

\begin{tikzpicture}
		[=>,->,shorten >=1pt,auto,node distance=1.7cm,
		semithick]
		\node[shape=circle,draw=white] (A) {($\succ_1$,};
		\node[shape=circle,draw=white] (B) [ right of= A] {$\succ_2$,};
		\node[shape=circle,draw=white] (C) [ right of= B] {$\dots$};
		\node[shape=circle,draw=white] (D) [ right of= C] {$\succ_n$)};
		\node[shape=circle,draw=white] (E) [ right of= D] {$\Longrightarrow_F$};
		\node[shape=circle,draw=white] (F) [ right of= E] {$\succ_{coll}$};


		\node[shape=circle,draw=white] (M) [below of=A] {($\textit{AF}_1$,};
		\node[shape=circle,draw=white] (N) [below of= B] {$\textit{AF}_2$,};
		\node[shape=circle,draw=white] (O) [below of= C] {$\dots$};
		\node[shape=circle,draw=white] (P) [below of= D] {$\textit{AF}_n$)};
		\node[shape=circle,draw=white] (R) [below of= E] {};
		\node[shape=circle,draw=white] (S) [below of= F] {$\textit{AF}_{coll}$};

		\draw [->] (M) -- (A) node[midway,right] {\small \textit{VAF}} ;
		\draw [->] (N) -- (B) node[midway,right] {\small \textit{VAF}} ;
		\draw [->] (P) -- (D) node[midway,right] {\small \textit{VAF}} ;
		\draw [->] (F) -- (S) node[midway,right] {\small \textit{VAF}} ;
		

		\node[draw=black, fit=(A) (S) ](FIt1) {};
		\end{tikzpicture} 

\caption{The individual attack graphs $\textit{AF}_i$ are provided a justification order over values $\succ_i$, which are then aggregated using a preference aggregation rule $F$, and the collective ordering $\succ_{coll}$ so obtained induces a collective attack graph $\textit{AF}_{coll}$ justified by it.} \label{scheme}
\end{figure}
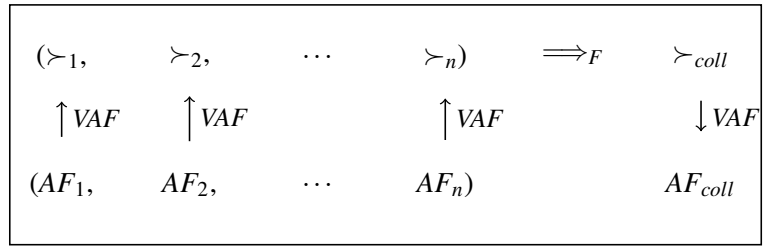
\vspace{-0.5em}

Let us define the proposed method formally.
Note that to ensure the functionality of the proposed mechanism, it is required that the selected choice of preference orderings is predetermined: we do not distinguish between exact preferences over values that agents believe in, as long as they induce the appropriate profile of defeat graphs. To cope with this issue, we will always specify a \emph{selection of justifications}. Intuitively, we infer preference orderings over values that participants of a debate have, based on their perceived strength of arguments.

\begin{definition}[Justification selection]
Let $(\textit{Graphs}^{\textit{VAF}})^n$ be the collection of all profiles of defeat graphs of \textit{VAF}=$\langle A, \rightarrow, V, val \rangle$ of length $n$, and \textit{Prefs}$^n$ be the set of all profiles of preference orderings over $V$ of length $n$. Then a \emph{justification selection} is a function $J: (\textit{Graphs}^{\textit{VAF}})^n \rightarrow \textit{Prefs}^n $ such that $J(\textbf{AF})$ induces~\textbf{AF} for any profile $\textbf{AF}$ in $(\textit{Graphs}^{\textit{VAF}})^n$ .
\end{definition}

Then, we can provide a definition of the proposed mechanism.

\begin{definition}[$C^J_F$-mechanism]
Take \textit{VAF}=$\langle A,\rightarrow, V, val \rangle$, a profile \textbf{AF} of defeat graphs of \textit{VAF}, a preference aggregation function $F$, and a justification selector $J$. Then, $C_F^J(\textit{VAF}, \textbf{AF})$  amounts to the defeat graph $\textit{AF}=\langle A, \rightarrow_{F(J(\textbf{AF}))}\rangle$.
\end{definition}

Henceforth, for the sake of clarity we will often assume that a justification selection is fixed and omit the superscript $J$.
Let us illustrate the mechanism on the running example.

\begin{example}
(Continuation of Example \ref{ex:graphaggr}) Consider the profile of graphs submitted by the experts as in Example~\ref{ex:graphaggr}. Note that they are all defeat graphs of the \textit{VAF} presented in Example \ref{ex:running}. Thus, we can retrieve the preference orderings justifying them and apply the Borda preference aggregation rule. Let us assume that the tie-breaking protocol follows a fixed ordering $ER > SF>IE>EV$. 
Finally, once the collective preference ordering has been obtained, we can construct the collective defeat graph.

Suppose that the selector $J$ associates with Expert~1 and Expert~3 the same orderings as in Example~\ref{ex:borda}, while for Expert~2 the selectore chooses $\textit{EV}\succ \textit{ER} \succ \textit{IE}\succ \textit{SF}$. If we aggregate the three preferences using the Borda rule we obtain $\textit{EV}\succ \textit{ER} \succ \textit{SF} \succ \textit{IE}$, which is a different result than the one in Example~\ref{ex:borda}. If we then construct the associated collective defeat graph, we observe that the attack from the argument $B$ on $A$ is blocked, while the remaining ones are as in the collective graph of Example~\ref{ex:borda}.  
\end{example}

The proposed mechanism $C_F$ enjoys a number of interesting properties. 
First, this method does not require information  about the exact preference orderings over values, so it does not require us to demand agents to provide additional information, as it was the case when the preference aggregation method was involved.
However, we still have not ensured that there is a graph aggregation rule corresponding to any mechanism $C_F$. To satisfy this, we would need to have that the collective graph is not dependent on the chosen \textit{VAF} justifying a profile of graphs.
\begin{definition}[\textit{VAF}-Independence]
 $C_F$ is \textit{VAF}-Independent if for any profile of justifiable graphs \textbf{AF} and a pair \textit{VAF}, \textit{VAF'} of VAFs justifying \textbf{AF},   $C_F(\textbf{AF}, \textit{VAF} )=C_F(\textbf{AF}, \textit{VAF'} )$.
\end{definition}

Unfortunately, as we will see, this can be the case only if $F$ violates unanimity or is dictatorial. 
\begin{proposition}\label{VafIndep}
If $C_F$ is \textit{VAF}-Independent, then $F$ is dictatorial or non-unanimous for $|V| \geq 4$.
\end{proposition}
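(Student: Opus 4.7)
The plan is to prove the contrapositive: assume $F$ is unanimous and non-dictatorial, and exhibit a profile of defeat graphs together with two distinct VAFs justifying it on which $C_F$ returns different collective defeat graphs. The structure of the argument parallels the $(\Rightarrow)$ direction of Proposition~\ref{InterpretationIndependence}, but instead of producing two different justifying profiles under a single VAF, I need two different VAFs accepting a single justifying profile, which is exactly where the hypothesis $|V|\geq 4$ will be used.

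First, I would apply Theorem~\ref{Arrow} to $F$. Since $|V|\geq 4 \geq 3$ and $F$ is unanimous and non-dictatorial, $F$ must violate independence. Hence there exist profiles $\textbf{P}_1,\textbf{P}_2$ over $V$ and a pair $v_a,v_b \in V$ such that $N^{v_a \succ v_b}_{\textbf{P}_1}=N^{v_a \succ v_b}_{\textbf{P}_2}$ but $v_a \succ v_b \in F(\textbf{P}_1)$ while $v_b \succ v_a \in F(\textbf{P}_2)$. Using $|V|\geq 4$, I pick two further distinct values $v_c,v_d \in V\setminus\{v_a,v_b\}$, which will serve as an alternative ``encoding'' of the same attack structure in the second VAF.

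Second, I would construct two VAFs sharing the same argument set $A=\{\alpha,\beta\}$, the same symmetric attack relation $\{(\alpha,\beta),(\beta,\alpha)\}$, and the same value set $V$, but with distinct value labelings: $val(\alpha)=v_a,\ val(\beta)=v_b$ in $\textit{VAF}$, and $val'(\alpha)=v_c,\ val'(\beta)=v_d$ in $\textit{VAF}'$. Because the attack graph coincides, the admissible defeat graphs are the same for both VAFs --- either $\{(\alpha,\beta)\}$ or $\{(\beta,\alpha)\}$ --- so any profile $\textbf{AF}$ built from these graphs is simultaneously a profile of defeat graphs of $\textit{VAF}$ and of $\textit{VAF}'$. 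I then take $\textbf{AF}$ to be the profile induced by $\textbf{P}_1$ under $\textit{VAF}$: the justifying profiles under $\textit{VAF}$ are exactly those agreeing with $\textbf{P}_1$ (equivalently, $\textbf{P}_2$) on the $(v_a,v_b)$-ranking per voter, while the justifying profiles under $\textit{VAF}'$ are those agreeing with a corresponding pattern on $(v_c,v_d)$.

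Third, I would unpack what VAF-independence demands of $J(\textbf{AF})^{\textit{VAF}}$ and $J(\textbf{AF})^{\textit{VAF}'}$: the collective graph contains $(\alpha,\beta)$ under $\textit{VAF}$ exactly when $v_a \succ v_b \in F(J(\textbf{AF})^{\textit{VAF}})$, and under $\textit{VAF}'$ exactly when $v_c \succ v_d \in F(J(\textbf{AF})^{\textit{VAF}'})$. Ranging $\textbf{AF}$ over all profiles of the allowed shape (so that the coalition $N_{\textbf{AF}}^{(\alpha,\beta)}$ takes every possible value), VAF-independence translates into a constraint that the outcome of $F$ on $(v_a,v_b)$ depends only on this coalition, and matches the outcome of $F$ on $(v_c,v_d)$ for the analogous coalition --- in other words, $F$ behaves in an IIA-like fashion on both pairs. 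Instantiating $\textbf{AF}$ so that the induced $\textit{VAF}$-justification can be taken to be $\textbf{P}_1$ on one hand and $\textbf{P}_2$ on the other (both agree on $(v_a,v_b)$ per voter, hence both justify $\textbf{AF}$), one recovers a contradiction with the Arrow witness, since $F(\textbf{P}_1)$ and $F(\textbf{P}_2)$ disagree on $(v_a,v_b)$.

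The main obstacle is disposing of the freedom that $J$ has on coordinates of the orderings unrelated to $(v_a,v_b)$ under $\textit{VAF}$ and to $(v_c,v_d)$ under $\textit{VAF}'$: the argument must ensure that no choice of $J$ can align these two outcomes without making $F$ effectively independent on the critical pair. This is handled by iterating the construction across all coalitions and exploiting the symmetry between the $(v_a,v_b)$- and $(v_c,v_d)$-encodings, so that the alleged VAF-independence of $C_F^J$ collapses into standard independence of $F$, directly contradicting the non-independence derived from Arrow and thus forcing $F$ to be either non-unanimous or dictatorial.
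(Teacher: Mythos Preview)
Your overall plan (contrapositive, then Arrow to obtain a non-independence witness, then build two VAFs certifying failure of VAF-Independence) matches the paper's. The crucial divergence is in how you make the two VAFs differ: you relabel the values on the \emph{attacking} arguments ($\alpha,\beta$ carry $v_a,v_b$ in $\textit{VAF}$ but $v_c,v_d$ in $\textit{VAF}'$), whereas the paper keeps the labels on the attacking pair \emph{identical} in both VAFs and only permutes values on two \emph{extra, non-attacked} arguments (this is exactly why $|V|\geq 4$ is needed there). That choice is not cosmetic; it is what makes the argument go through.

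With your construction, VAF-Independence only tells you that the $(v_a,v_b)$-outcome of $F$ at the \emph{single} profile $J^{\textit{VAF}}(\textbf{AF})$ coincides with the $(v_c,v_d)$-outcome of $F$ at the \emph{single} profile $J^{\textit{VAF}'}(\textbf{AF})$. This does not yield that ``$F$'s outcome on $(v_a,v_b)$ depends only on the coalition,'' because $J$ is fixed: for each $\textbf{AF}$ you get exactly one justifying profile per VAF, not all of them, so the step ``the induced $\textit{VAF}$-justification can be taken to be $\textbf{P}_1$ on one hand and $\textbf{P}_2$ on the other'' is illegitimate. What you are invoking at that point is really interpretation independence (two justifications of the same profile under one VAF), not VAF-Independence. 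Your final paragraph acknowledges the obstacle but does not resolve it: ``iterating over coalitions'' still only samples one $J$-selected profile per coalition and never forces $F(\textbf{P}_1)$ and $F(\textbf{P}_2)$ to agree on $(v_a,v_b)$.

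In the paper's construction both VAFs assign $v_1,v_2$ to the attacking pair, so the collective defeat graph under \emph{either} VAF is governed by $F$'s decision on the \emph{same} pair $(v_1,v_2)$. The two VAFs are nonetheless distinct (the swap on the dummy arguments), so the justification selector, which is formally indexed by the VAF, may return $\textbf{P}$ for $\textit{VAF}$ and $\textbf{P}'$ for $\textit{VAF}'$; since $F(\textbf{P})$ and $F(\textbf{P}')$ disagree on $(v_1,v_2)$, VAF-Independence fails. To repair your argument, relabel so that $\alpha,\beta$ carry $v_a,v_b$ in \emph{both} VAFs and introduce at least two further (unattacked) arguments whose value labels you permute between $\textit{VAF}$ and $\textit{VAF}'$.
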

\begin{proof}
    Let us firstly show that any \textit{VAF}-Independent $C_F$ requires $F$ to be independent for profiles based on at least 4 values. Suppose that $F$ is not independent. Then, take two profiles \textbf{P}, \textbf{P'} over a set $V$ with $|V| \geq 4$ such that for some pair of values $v_1$, $v_2$ we have that $N_\textbf{P}^{v_1 \succ v_2}=N_\textbf{P'}^{v_1 \succ v_2}$ but $v_1 \succ_{F(\textbf{P})} v_2$ while $v_2 \succ_{F(\textbf{P'})} v_1$. Then, consider a profile of graphs \textbf{AF}$= \langle \textit{AF}_1, \dots \textit{AF}_n \rangle$ such that $n= |\textbf{P}|$ and for $\textit{AF}_i \in \textbf{AF}$, $\textit{AF}_i = \{A, \rightarrow_i \}$ and $A= \{a_{v_i}| v_i \in V \}$, $\rightarrow_i = \{a_{v_1} \rightarrow_i a_{v_2} \}$ if $v_1 \succ_i v_2$, otherwise $\rightarrow_i = \{a_{v_2} \rightarrow_i a_{v_1} \}$. Now take two \textit{VAF}s such that \textbf{AF} is a profile of defeat graphs of \textit{VAF} justified by \textbf{P}, while \textbf{AF} is a profile of defeat graphs of \textit{VAF'} justified by \textbf{P'}, such that for both \textit{VAF}s, $a_{v_1}$ and $a_{v_2}$ attack each other, $val(a_{v_1})=val'(a_{v_1})=v_1$ and $val(a_{v_2})=val'(a_{v_2})=v_2$, while for some pair of arguments $a_{v_i}, a_{v_j}$, $val(a_{v_i})= v_i, val(a_{v_j})= v_j$ and $val'(a_{v_i})= v_j, val'(a_{v_j})= v_i$. Now note that the outcome of $C_F(\textbf{AF}, \textit{VAF})$, $C_F(\textbf{AF}, \textit{VAF'})$ only depends on the collective preference ordering over $v_1, v_2$. So, we will have that $a_{v_1} \rightarrow_{C_F(\textbf{AF}, \textit{VAF})} a_{v_2}$, while $a_{v_2} \rightarrow_{C_F(\textbf{AF}, \textit{VAF'})} a_{v_1}$. So, $C_F$ is not \textit{VAF}-Independent.
    It follows now from Theorem \ref{Arrow} that if $C_F$ is \textit{VAF}-Independent, then $F$ is dictatorial or non-unanimous.
    \end{proof}
    
    This leads us to conclude that when there are at least 4 values, if $F$ is non-dictatorial and unanimous, a two arguably basic properties, then $C_F$ does not correspond to any graph aggregation rule. 
 %
%
Note, however, that this issue is solved once we fix the \textit{VAF} justifying particular profiles of defeat graphs.

Let us first define a fixed selection of a \textit{VAF} for a profile of graphs.
 \begin{definition}[VAF selection]
 Let $(\textit{Graphs})_J^n$ be the collection of all justifiable profiles of argumentation frameworks of length $n$, and \textit{VAFs} be the set of all \textit{VAFs}. Then a \emph{VAF selection} is a function $S: (\textit{Graphs})_J^n \rightarrow \textit{VAFs} $ such that for any profile $\textbf{AF}$, $S(\textbf{AF})$ is such that for every $\textit{AF}_i \in \textbf{AF}$, $\textit{AF}_i$ is a defeat graph of $S(\textbf{AF})$.
 
 \end{definition}
 
 We can now define the modified mechanism.
 
\begin{definition}[$C^{VAF}_{F,S}$ mechanisms with fixed \textit{VAF}]
 Let $F$ be a preference aggregation rule. Then, the combined mechanism is the function $C_{F,S}^{\textit{VAF}}: Graphs_J \rightarrow Graphs$, where $Graphs_J$ is the set of all justifiable profiles of graphs, such that $C_{F,S}^{VAF}(\textbf{AF})=\langle A, \rightarrow_{F(\textbf{P})}\rangle$ such that \textbf{P} is a profile of preference orderings justifying  \textbf{AF} from the perspective of $S(\textbf{AF})$.

\end{definition}

In what follows we will assume that $S$ is fixed and drop the subscript when clear from context.
Then, it is not difficult to show that $C_F^{VAF}$ corresponds to a graph aggregation rule.
\begin{observation}
 Any $C_F^{VAF}$ corresponds to a graph aggregation rule restricted to justifiable inputs.
\end{observation}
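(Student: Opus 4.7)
The claim is that, once both selectors $S$ and $J$ are fixed, the combined mechanism $C_F^{\textit{VAF}}$ can be read off as a graph aggregation rule defined on the domain $\textit{Graphs}_J$ of justifiable profiles. Since a graph aggregation rule restricted to justifiable inputs is nothing more than a function from $\textit{Graphs}_J$ to the set of argumentation frameworks, the plan is simply to verify that $C_F^{\textit{VAF}}(\textbf{AF})$ is uniquely determined by $\textbf{AF}$ alone.

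To see this, I would trace the four deterministic steps composing the mechanism. Given a justifiable profile $\textbf{AF}$: the \textit{VAF} selector $S$ returns a single $\textit{VAF} = S(\textbf{AF})$ of which every $\textit{AF}_i \in \textbf{AF}$ is a defeat graph; the justification selector $J$, being a function on $(\textit{Graphs}^{S(\textbf{AF})})^n$, returns a single profile $\textbf{P} = J(\textbf{AF})$ of preference orderings over the values of $S(\textbf{AF})$ that induces $\textbf{AF}$; the preference aggregation rule $F$ returns a single collective ordering $F(\textbf{P})$; and this ordering, applied to $S(\textbf{AF})$, induces a unique defeat graph $\langle A, \rightarrow_{F(\textbf{P})} \rangle$. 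The composition of these four steps is precisely the assignment $\textbf{AF} \mapsto C_F^{\textit{VAF}}(\textbf{AF})$, which therefore qualifies as a graph aggregation rule on $\textit{Graphs}_J$.

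There is no substantive obstacle here: the whole purpose of fixing the selectors $S$ and $J$ is to eliminate precisely the two ambiguities -- the choice of a justifying \textit{VAF} and the choice of a justifying preference profile -- that blocked the analogous correspondence for pure preference aggregation in the previous section (cf.\ the failure of \textit{VAF}-Independence and Interpretation Independence established there). Once these choices are pinned down, the combined mechanism collapses to an ordinary function of the input profile, and the observation follows immediately by reading off its signature.
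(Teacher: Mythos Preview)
Your proposal is correct and follows essentially the same approach as the paper: both argue that with the selectors $S$ and $J$ fixed in advance, every intermediate choice (the justifying \textit{VAF}, the justifying preference profile, the aggregated ordering, and the induced defeat graph) is uniquely determined by the input profile $\textbf{AF}$, so $C_F^{\textit{VAF}}$ is a well-defined function on $\textit{Graphs}_J$. Your version is more explicit in spelling out the four deterministic steps, but the underlying reasoning is identical to the paper's brief observation that the relevant choices ``have been pre-determined.''
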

\begin{proof}
    Take a profile of graphs \textbf{AF} justifable by a given \textit{VAF} and a combined mechanism  $C_F^{\textit{VAF}}$. Then notice that the choice of \textit{VAF} and of the profile of preference orderings justifying \textbf{AF} have been pre-determined, 
     and thus we can only obtain one graph. This is the case for all justifiable profiles of graphs. So, $C_F^{\textit{VAF}}$ indeed corresponds to a graph aggregation rule.
\end{proof}
Not only this movement ensures the existence of a corresponding graph aggregation rule, but also that the correspondent inherits beneficial properties of the used preference aggregation rule.
\begin{proposition}
Let $F$
be a preference aggregation rule. Then it holds that:
(1) If $F$ is unanimous, then the graph aggregation rule $F'$ corresponding to a $C_F^{\textit{VAF}}$

 is unanimous.
(2) If $F$ is anonymous, then the graph aggregation rule $F'$ corresponding to a $C_F^{\textit{VAF}}$
 is anonymous.

 is monotonic.
(3) If $F$ is independent, then the graph aggregation rule $F'$ corresponding to a $C_F^{\textit{VAF}}$
 is independent.
\end{proposition}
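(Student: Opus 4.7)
The plan is to push each axiom through the three-step pipeline defining $C_F^{\textit{VAF}}$: select a VAF via $S$, lift each individual defeat graph to a justifying preference via $J$, apply $F$, and project the resulting collective preference back into a defeat graph. I treat $J$ agent-wise, so that $J(\textbf{AF})_i$ depends only on $\textit{AF}_i$, which matches the natural reading of picking, independently for each agent, some linear ordering over values compatible with that agent's submitted graph.

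For (1), suppose every $\textit{AF}_i$ contains the attack $a \rightarrow b$. Then $a \rightarrow b$ must be present in $S(\textbf{AF})$, and for each agent $i$ the preference $J(\textbf{AF})_i$ satisfies $\neg(val(b) \succ_i val(a))$. If $val(a) = val(b)$, then the attack is automatically a defeat under every audience, and hence survives in $F'(\textbf{AF})$. Otherwise all agents strictly rank $val(a)$ above $val(b)$; unanimity of $F$ gives $val(a) \succ val(b)$ in $F(J(\textbf{AF}))$, so $a \rightarrow b$ belongs to the induced collective defeat graph.

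For (2), the agent-wise reading of $J$ together with the mild assumption that $S$ is invariant under permutations of the profile (any reordering of a justifiable profile is still justified by the same VAF, so $S$ may be taken to ignore the ordering) means that permuting the input graph profile permutes $J(\textbf{AF})$ accordingly while fixing the selected VAF. Anonymity of $F$ then makes the collective ordering, and hence the induced defeat graph, invariant under the permutation.

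For (3), consider two profiles $\textbf{AF}, \textbf{AF}'$ with $N^{a \rightarrow b}_{\textbf{AF}} = N^{a \rightarrow b}_{\textbf{AF}'}$, working under the natural assumption that $S$ selects the same VAF when it can. If $a \rightarrow b$ is absent from that VAF, neither output contains it; if $val(a) = val(b)$, both outputs do. Otherwise agent $i$'s defeat graph contains $a \rightarrow b$ exactly when $val(a) \succ_i val(b)$ in her justifying preference, so the equality of graph-level supporters translates into equality of preference-level supporters for $val(a) \succ val(b)$. Independence of $F$ yields the same collective verdict on $val(a) \succ val(b)$, hence the same verdict on $a \rightarrow b$ in the two output graphs. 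The main delicate point shared by (2) and (3) is making precise the behaviour of $S$ across related profiles; these consistency conditions are innocuous given the intended meaning of $C_F^{\textit{VAF}}$ but should be stated rather than left implicit.
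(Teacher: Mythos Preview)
Your argument is correct and follows essentially the same route as the paper: push each axiom from the preference level through the $J$/$F$/defeat-graph pipeline, handling the edge case $val(a)=val(b)$ separately and otherwise translating graph-level support for $a\rightarrow b$ into preference-level support for $val(a)\succ val(b)$. The paper uses contradiction where you argue directly, but the content is the same. If anything, you are more careful than the paper: you make explicit the regularity assumptions on $S$ (permutation-invariance, stability across profiles) and on $J$ (agent-wise behaviour) that parts (2) and (3) rely on, whereas the paper's proof silently assumes $S(\pi(\textbf{AF}))=S(\textbf{AF})$ and $J(\pi(\textbf{AF}))=\pi(J(\textbf{AF}))$ without comment.
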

\begin{proof}

(1) Take any unanimous preference aggregation rule $F$. Then, suppose that the graph aggregation rule $F'$ corresponding to $C_F^{\textit{VAF}}$ is not unanimous. Then, take a \textit{VAF} and a profile of preference orderings $\textbf{P}^*$ inducing a profile of defeat graphs $\textbf{AF}$ such that there is an attack $a \rightarrow b$ such that for any $AF_i \in \textbf{AF}$, $a\rightarrow b \in AF_i$, but $a \rightarrow b \notin F'(\textbf{AF})$. Note that then $val(a) \neq val(b)$. Then, we must have that all agents submit that $val(a) \succ val(b)$. But then, by unanimity of $F$, $val(a) \succ val(b) \in F(\textbf{P}^*)$, and thus $a \rightarrow b \in F'(\textbf{AF})$. Contradiction.

(2) Take any anonymous preference aggregation rule $F$. Then, suppose that the graph aggregation rule $F'$ corresponding to $C_F^{\textit{VAF}}$ is not anonymous. Then, take a \textit{VAF} and a profile of preference orderings $\textbf{P}$ and a sequence $\textbf{AF}$ of defeat graphs induced by $\textbf{P}$ such that there is a permutation $\pi$ of $\textbf{AF}$ such that $F'(\textbf{AF}) \neq F'(\pi(\textbf{AF}))$. But now note that this would imply that $F(\textbf{P}) \neq F(\pi(\textbf{P'}))$ which cannot be the case by anonymity of $F$.

(3) Take any independent preference aggregation rule $F$. Then, suppose that the graph aggregation rule $F'$ corresponding to $C_F^{VAF}$ is not independent. So, take a \textit{VAF} and two profiles of defeat graphs \textbf{AF}, \textbf{AF'} of \textit{VAF} such that there is some attack $a\rightarrow b$ such that $N^{a \rightarrow b}_{\textbf{AF}} = N^{a \rightarrow b}_{\textbf{AF'}}$, but $a \rightarrow b \in F'(\textbf{AF})$ while $a \rightarrow b \notin F'(\textbf{AF'})$. Note that by connectedness of preference orderings, for any justification $\textbf{P}, \textbf{P'} $ of \textbf{AF}, \textbf{AF'}, $N^{val(a) \succ val(b)}_{\textbf{P}} = N^{val(a) \succ val(b)}_{\textbf{P'}}$. So, by independence, $a \succ b \in F(\textbf{P})$ iff $a \succ b \in F(\textbf{P})$. So, if  $a \rightarrow b \in F'(\textbf{AF})$, then $a \rightarrow b \in F'(\textbf{AF'})$. Contradiction.
\end{proof}

So, we have provided an alternative method of aggregating agents' views on the importance of values in the context of value-based argumentation. This has allowed us to overcome the issues we pointed out for the preference aggregation and graph aggregation approaches: we ensure that the outcome of the procedure is a defeat graph of the considered \textit{VAF}, while making sure that the mechanism corresponds to a graph aggregation rule restricted to justifiable inputs. 

\section{Conclusion and Perspectives}

In this paper we have explored three plausible methods of aggregating agents' views in the context of value-based argumentation, in order to obtain a collective argumentation framework: preference-based, graph-based, and combination-based aggregation methods. 

For the preference aggregation approach, we have shown that non-dictatorial rules never correspond to graph aggregation rules and thus, depending on the context, different collective argumentation frameworks can be obtained even though agents perceive the defeat relation in the same manner. 
On the other hand, 
no graph aggregation approach satisfying intuitive properties can ensure that the  outcome of the procedure is a defeat graph of the same \textit{VAF} as the input graphs. These issues can be circumvented by an aggregation approach combining preference aggregation with the extraction of an ordering over values from the individual attack relations, which fixes a \textit{VAF} and devises a suitable aggregation procedure for the specific context. 
This is 
one of the main conclusions of this work: obtaining a \emph{justifiable} collective argumentation graph is not possible without taking into consideration the underlying \textit{VAF}, which acts as the context of the problem.

This observation leaves a vast room for further research, investigating further aggregators that are specific to a given (class of) \textit{VAF}. In the current setting we have restricted ourselves to studying preference orderings over values expressed as \emph{linear} orderings. This is a strong requirement. It would be of high interest to study the aggregation problems stated in this paper when agents are allowed to consider particular values as equally important. Moreover, it would be interesting to study the scenario in which agents do not agree on the assignment of values to arguments. Further, it would be of natural interest to investigate, following e.g. Kaci and van der Torre \cite{kaci2008preference}, the scenario in which arguments appeal to multiple values. 
Finally, studying classes of argumentation problems which are not affected by the problems highlighted in the paper would be very beneficial towards showing the applicability of the proposed approaches.

\subsubsection*{Acknowledgments}
We would like to thank Sonja Smets for useful comments in the early phases 
of this project. Most of the work was completed when Grzegorz Lisowski was hosted at the University of Toulouse in 2018.
\bibliographystyle{eptcs}
\bibliography{bibliography}

\begin{thebibliography}{10}
\providecommand{\bibitemdeclare}[2]{}
\providecommand{\surnamestart}{}
\providecommand{\surnameend}{}
\providecommand{\urlprefix}{Available at }
\providecommand{\url}[1]{\texttt{#1}}
\providecommand{\href}[2]{\texttt{#2}}
\providecommand{\urlalt}[2]{\href{#1}{#2}}
\providecommand{\doi}[1]{doi:\urlalt{http://dx.doi.org/#1}{#1}}
\providecommand{\bibinfo}[2]{#2}

\bibitemdeclare{article}{airiau2016rationalisation}
\bibitem{airiau2016rationalisation}
\bibinfo{author}{St{\'e}phane \surnamestart Airiau\surnameend},
  \bibinfo{author}{Elise \surnamestart Bonzon\surnameend},
  \bibinfo{author}{Ulle \surnamestart Endriss\surnameend},
  \bibinfo{author}{Nicolas \surnamestart Maudet\surnameend} \&
  \bibinfo{author}{Julien \surnamestart Rossit\surnameend}
  (\bibinfo{year}{2017}): \emph{\bibinfo{title}{Rationalisation of profiles of
  abstract argumentation frameworks: Characterisation and complexity}}.
\newblock {\sl \bibinfo{journal}{Journal of Artificial Intelligence Research}}
  \bibinfo{volume}{60}, pp. \bibinfo{pages}{149--177}, \doi{10.1613/jair.5436}.

\bibitemdeclare{book}{arrow1952social}
\bibitem{arrow1952social}
\bibinfo{author}{Kenneth~J \surnamestart Arrow\surnameend}
  (\bibinfo{year}{1951}): \emph{\bibinfo{title}{Social {C}hoice and
  {I}ndividual {V}alues}}.
\newblock \bibinfo{publisher}{Wiley}.

\bibitemdeclare{article}{awad2015judgement}
\bibitem{awad2015judgement}
\bibinfo{author}{Edmond \surnamestart Awad\surnameend},
  \bibinfo{author}{Richard \surnamestart Booth\surnameend},
  \bibinfo{author}{Fernando \surnamestart Tohm{\'e}\surnameend} \&
  \bibinfo{author}{Iyad \surnamestart Rahwan\surnameend}
  (\bibinfo{year}{2015}): \emph{\bibinfo{title}{Judgement {A}ggregation in
  {M}ulti-{A}gent {A}rgumentation}}.
\newblock {\sl \bibinfo{journal}{Journal of {L}ogic and {C}omputation}}
  \bibinfo{volume}{27}(\bibinfo{number}{1}), pp. \bibinfo{pages}{227--259},
  \doi{10.1093/logcom/exv055}.

\bibitemdeclare{article}{bench2003persuasion}
\bibitem{bench2003persuasion}
\bibinfo{author}{Trevor \surnamestart Bench-Capon\surnameend}
  (\bibinfo{year}{2003}): \emph{\bibinfo{title}{Persuasion in {P}ractical
  {A}rgument {U}sing {V}alue-{B}ased {A}rgumentation {F}rameworks}}.
\newblock {\sl \bibinfo{journal}{Journal of {L}ogic and {C}omputation}}
  \bibinfo{volume}{13}(\bibinfo{number}{3}), pp. \bibinfo{pages}{429--448},
  \doi{10.1093/logcom/13.3.429}.

\bibitemdeclare{article}{bench2007audiences}
\bibitem{bench2007audiences}
\bibinfo{author}{Trevor \surnamestart Bench-Capon\surnameend},
  \bibinfo{author}{Sylvie \surnamestart Doutre\surnameend} \&
  \bibinfo{author}{Paul~E \surnamestart Dunne\surnameend}
  (\bibinfo{year}{2007}): \emph{\bibinfo{title}{Audiences in {A}rgumentatio
  {F}rameworks}}.
\newblock {\sl \bibinfo{journal}{Artificial Intelligence}}
  \bibinfo{volume}{171}(\bibinfo{number}{1}), pp. \bibinfo{pages}{42--71},
  \doi{10.1016/j.artint.2006.10.013}.

\bibitemdeclare{book}{brandt2016handbook}
\bibitem{brandt2016handbook}
\bibinfo{author}{Felix \surnamestart Brandt\surnameend},
  \bibinfo{author}{Vincent \surnamestart Conitzer\surnameend},
  \bibinfo{author}{Ulle \surnamestart Endriss\surnameend},
  \bibinfo{author}{J{\'e}r{\^o}me \surnamestart Lang\surnameend} \&
  \bibinfo{author}{Ariel~D \surnamestart Procaccia\surnameend}
  (\bibinfo{year}{2016}): \emph{\bibinfo{title}{Handbook of {C}omputational
  {S}ocial {C}hoice}}.
\newblock \bibinfo{publisher}{Cambridge {U}niversity {P}ress},
  \doi{10.1017/CBO9781107446984}.

\bibitemdeclare{article}{chen2019preservation}
\bibitem{chen2019preservation}
\bibinfo{author}{Weiwei \surnamestart Chen\surnameend} \& \bibinfo{author}{Ulle
  \surnamestart Endriss\surnameend} (\bibinfo{year}{2019}):
  \emph{\bibinfo{title}{Preservation of {S}emantic {P}roperties in {C}ollective
  {A}rgumentation: {T}he {C}ase of {A}ggregating {A}bstract {A}rgumentation
  {F}rameworks}}.
\newblock {\sl \bibinfo{journal}{Artificial Intelligence}}
  \bibinfo{volume}{269}, pp. \bibinfo{pages}{27--48},
  \doi{10.1016/j.artint.2018.10.003}.

\bibitemdeclare{article}{Coste-MarquisAIJ07}
\bibitem{Coste-MarquisAIJ07}
\bibinfo{author}{Sylvie \surnamestart Coste{-}Marquis\surnameend},
  \bibinfo{author}{Caroline \surnamestart Devred\surnameend},
  \bibinfo{author}{S{\'{e}}bastien \surnamestart Konieczny\surnameend},
  \bibinfo{author}{Marie{-}Christine \surnamestart
  Lagasquie{-}Schiex\surnameend} \& \bibinfo{author}{Pierre \surnamestart
  Marquis\surnameend} (\bibinfo{year}{2007}): \emph{\bibinfo{title}{On {T}he
  {M}erging of {D}ung's {A}rgumentation {S}ystems}}.
\newblock {\sl \bibinfo{journal}{Artificial Intelligence}}
  \bibinfo{volume}{171}(\bibinfo{number}{10-15}), pp.
  \bibinfo{pages}{730--753}, \doi{10.1016/j.artint.2007.04.012}.

\bibitemdeclare{article}{delobelle2016merging}
\bibitem{delobelle2016merging}
\bibinfo{author}{J{\'e}r{\^o}me \surnamestart Delobelle\surnameend},
  \bibinfo{author}{Adrian \surnamestart Haret\surnameend},
  \bibinfo{author}{S{\'e}bastien \surnamestart Konieczny\surnameend},
  \bibinfo{author}{Jean-Guy \surnamestart Mailly\surnameend},
  \bibinfo{author}{Julien \surnamestart Rossit\surnameend} \&
  \bibinfo{author}{Stefan \surnamestart Woltran\surnameend}
  (\bibinfo{year}{2016}): \emph{\bibinfo{title}{Merging of {A}bstract
  {A}rgumentation {F}rameworks}}.
\newblock {\sl \bibinfo{journal}{Proceedings of the 2016 {K}nowledge
  {R}epresentation ({K}{R})}}, pp. \bibinfo{pages}{33--42}.

\bibitemdeclare{article}{delobelle2018}
\bibitem{delobelle2018}
\bibinfo{author}{J{\'e}r{\^o}me \surnamestart Delobelle\surnameend},
  \bibinfo{author}{S{\'e}bastien \surnamestart Konieczny\surnameend} \&
  \bibinfo{author}{Srdjan \surnamestart Vesic\surnameend}
  (\bibinfo{year}{2018}): \emph{\bibinfo{title}{{On {T}he {A}ggregation of
  {A}rgumentation {F}rameworks: {O}perators and {P}ostulates}}}.
\newblock {\sl \bibinfo{journal}{Journal of Logic and Computation}}
  \bibinfo{volume}{28}(\bibinfo{number}{7}), pp. \bibinfo{pages}{1671--1699},
  \doi{10.1093/logcom/exy023}.

\bibitemdeclare{article}{dung1995acceptability}
\bibitem{dung1995acceptability}
\bibinfo{author}{Phan~Minh \surnamestart Dung\surnameend}
  (\bibinfo{year}{1995}): \emph{\bibinfo{title}{On the {A}cceptability of
  {A}rguments and its {F}undamental {R}ole in {N}onmonotonic {R}easoning,
  {L}ogic {P}rogramming and n-{P}erson {G}ames}}.
\newblock {\sl \bibinfo{journal}{Artificial {I}ntelligence}}
  \bibinfo{volume}{77}(\bibinfo{number}{2}), pp. \bibinfo{pages}{321--357},
  \doi{10.1016/0004-3702(94)00041-x}.

\bibitemdeclare{article}{dunne2011weighted}
\bibitem{dunne2011weighted}
\bibinfo{author}{Paul~E \surnamestart Dunne\surnameend},
  \bibinfo{author}{Anthony \surnamestart Hunter\surnameend},
  \bibinfo{author}{Peter \surnamestart McBurney\surnameend},
  \bibinfo{author}{Simon \surnamestart Parsons\surnameend} \&
  \bibinfo{author}{Michael \surnamestart Wooldridge\surnameend}
  (\bibinfo{year}{2011}): \emph{\bibinfo{title}{{W}eighted {A}rgument
  {S}ystems: {B}asic {D}efinitions, {A}lgorithms, and {C}omplexity {R}esults}}.
\newblock {\sl \bibinfo{journal}{Artificial Intelligence}}
  \bibinfo{volume}{175}(\bibinfo{number}{2}), pp. \bibinfo{pages}{457--486},
  \doi{10.1016/j.artint.2010.09.005}.

\bibitemdeclare{article}{dunne2012argument}
\bibitem{dunne2012argument}
\bibinfo{author}{Paul~E \surnamestart Dunne\surnameend},
  \bibinfo{author}{Pierre \surnamestart Marquis\surnameend} \&
  \bibinfo{author}{Michael \surnamestart Wooldridge\surnameend}
  (\bibinfo{year}{2012}): \emph{\bibinfo{title}{Argument {A}ggregation: {B}asic
  {A}xioms and {C}omplexity {R}esults.}}
\newblock {\sl \bibinfo{journal}{Proceedings of the {C}oference on
  {C}omputational {M}odels of {A}rgument (C{O}{M}{M}{A})}}, pp.
  \bibinfo{pages}{129--140}.

\bibitemdeclare{inproceedings}{DunneMW12}
\bibitem{DunneMW12}
\bibinfo{author}{Paul~E. \surnamestart Dunne\surnameend},
  \bibinfo{author}{Pierre \surnamestart Marquis\surnameend} \&
  \bibinfo{author}{Michael \surnamestart Wooldridge\surnameend}
  (\bibinfo{year}{2012}): \emph{\bibinfo{title}{Argument Aggregation: Basic
  Axioms and Complexity Results}}.
\newblock In: {\sl \bibinfo{booktitle}{Computational Models of Argument}}, pp.
  \bibinfo{pages}{129--140}.

\bibitemdeclare{incollection}{Endriss16}
\bibitem{Endriss16}
\bibinfo{author}{Ulle \surnamestart Endriss\surnameend} (\bibinfo{year}{2016}):
  \emph{\bibinfo{title}{Judgment aggregation}}.
\newblock In \bibinfo{editor}{F~\surnamestart Brandt\surnameend},
  \bibinfo{editor}{V~\surnamestart Conitzer\surnameend},
  \bibinfo{editor}{U~\surnamestart Endriss\surnameend},
  \bibinfo{editor}{J~\surnamestart Lang\surnameend} \& \bibinfo{editor}{A.~D
  \surnamestart Procaccia\surnameend}, editors: {\sl
  \bibinfo{booktitle}{Handbook of Computational Social Choice}},
  \bibinfo{publisher}{Cambridge University Press},
  \doi{10.1017/cbo9781107446984.018}.

\bibitemdeclare{inproceedings}{EndrissGrandiAAAI2014}
\bibitem{EndrissGrandiAAAI2014}
\bibinfo{author}{Ulle \surnamestart Endriss\surnameend} \&
  \bibinfo{author}{Umberto \surnamestart Grandi\surnameend}
  (\bibinfo{year}{2014}): \emph{\bibinfo{title}{Binary Aggregation by Selection
  of the Most Representative Voter}}.
\newblock In: {\sl \bibinfo{booktitle}{Proceedings of the AAAI Conference on
  Artificial Intelligence (AAAI)}}, pp. \bibinfo{pages}{668--674}.

\bibitemdeclare{inproceedings}{endriss2014collective}
\bibitem{endriss2014collective}
\bibinfo{author}{Ulle \surnamestart Endriss\surnameend} \&
  \bibinfo{author}{Umberto \surnamestart Grandi\surnameend}
  (\bibinfo{year}{2014}): \emph{\bibinfo{title}{Collective Rationality in Graph
  Aggregation.}}
\newblock In: {\sl \bibinfo{booktitle}{21st European Conference on Artificial
  Intelligence (ECAI)}}, pp. \bibinfo{pages}{291--296}.

\bibitemdeclare{article}{endriss2017graph}
\bibitem{endriss2017graph}
\bibinfo{author}{Ulle \surnamestart Endriss\surnameend} \&
  \bibinfo{author}{Umberto \surnamestart Grandi\surnameend}
  (\bibinfo{year}{2017}): \emph{\bibinfo{title}{Graph {A}ggregation}}.
\newblock {\sl \bibinfo{journal}{Artificial Intelligence}}
  \bibinfo{volume}{245}, pp. \bibinfo{pages}{86--114},
  \doi{10.1016/j.artint.2017.01.001}.

\bibitemdeclare{incollection}{EveraereEtAlTRENDS2017}
\bibitem{EveraereEtAlTRENDS2017}
\bibinfo{author}{Patricia \surnamestart Everaere\surnameend},
  \bibinfo{author}{S\'ebastien \surnamestart Konieczny\surnameend} \&
  \bibinfo{author}{Pierre \surnamestart Marquis\surnameend}
  (\bibinfo{year}{2017}): \emph{\bibinfo{title}{An Introduction to Belief
  Merging and its Links with Judgment Aggregation}}.
\newblock In \bibinfo{editor}{Ulle \surnamestart Endriss\surnameend}, editor:
  {\sl \bibinfo{booktitle}{Trends in Computational Social Choice}},
  chapter~\bibinfo{chapter}{7}, \bibinfo{publisher}{AI Access}, pp.
  \bibinfo{pages}{123--143}.

\bibitemdeclare{inproceedings}{Fan2015OnCE}
\bibitem{Fan2015OnCE}
\bibinfo{author}{Xiuyi \surnamestart Fan\surnameend} \&
  \bibinfo{author}{Francesca \surnamestart Toni\surnameend}
  (\bibinfo{year}{2015}): \emph{\bibinfo{title}{On Computing Explanations in
  Argumentation}}.
\newblock In: {\sl \bibinfo{booktitle}{Proceedings of the AAAI Conference on
  Artificial Intelligence (AAAI)}}, pp. \bibinfo{pages}{1496--1502}.

\bibitemdeclare{article}{gibbard1973manipulation}
\bibitem{gibbard1973manipulation}
\bibinfo{author}{Allan \surnamestart Gibbard\surnameend}
  (\bibinfo{year}{1973}): \emph{\bibinfo{title}{Manipulation of {V}oting
  {S}chemes: a {G}eneral {R}esult}}.
\newblock {\sl \bibinfo{journal}{Econometrica: {J}ournal of the Econometric
  Society}}, pp. \bibinfo{pages}{587--601}, \doi{10.2307/1914083}.

\bibitemdeclare{article}{GrossiPigozzi14}
\bibitem{GrossiPigozzi14}
\bibinfo{author}{Davide \surnamestart Grossi\surnameend} \&
  \bibinfo{author}{Gabriella \surnamestart Pigozzi\surnameend}
  (\bibinfo{year}{2014}): \emph{\bibinfo{title}{Judgment {A}ggregation: a
  {P}rimer}}.
\newblock {\sl \bibinfo{journal}{Synthesis Lectures on Artificial Intelligence
  and Machine Learning}} \bibinfo{volume}{8}(\bibinfo{number}{2}), pp.
  \bibinfo{pages}{1--151}, \doi{10.2200/S00559ED1V01Y201312AIM027}.

\bibitemdeclare{article}{kaci2008preference}
\bibitem{kaci2008preference}
\bibinfo{author}{Souhila \surnamestart Kaci\surnameend} \&
  \bibinfo{author}{Leendert \surnamestart van~der Torre\surnameend}
  (\bibinfo{year}{2008}): \emph{\bibinfo{title}{Preference-based argumentation:
  Arguments supporting multiple values}}.
\newblock {\sl \bibinfo{journal}{International Journal of Approximate
  Reasoning}} \bibinfo{volume}{48}(\bibinfo{number}{3}), pp.
  \bibinfo{pages}{730--751}, \doi{10.1016/j.ijar.2007.07.005}.

\bibitemdeclare{mastersthesis}{GL18}
\bibitem{GL18}
\bibinfo{author}{Grzegorz \surnamestart Lisowski\surnameend}
  (\bibinfo{year}{2018}): \emph{\bibinfo{title}{Preventing Manipulation in
  Aggregating Value-Based Argumentation Frameworks}}.
\newblock Master's thesis, \bibinfo{school}{Universiteit Van Amsterdam}.

\bibitemdeclare{inproceedings}{SAFA18}
\bibitem{SAFA18}
\bibinfo{author}{Grzegorz \surnamestart Lisowski\surnameend},
  \bibinfo{author}{Sylvie \surnamestart Doutre\surnameend} \&
  \bibinfo{author}{Umberto \surnamestart Grandi\surnameend}
  (\bibinfo{year}{2018}): \emph{\bibinfo{title}{Preventing Manipulation in
  Aggregating Audiences in Value-Based Argumentation Frameworks}}.
\newblock In: {\sl \bibinfo{booktitle}{Proceedings of International Workshop on
  Systems and Algorithms for Formal Argumentation {(SAFA} 2018)}}, pp.
  \bibinfo{pages}{48--59}.

\bibitemdeclare{inproceedings}{StrangeNumber}
\bibitem{StrangeNumber}
\bibinfo{author}{Paul-Amaury \surnamestart Matt\surnameend} \&
  \bibinfo{author}{Francesca \surnamestart Toni\surnameend}
  (\bibinfo{year}{2008}): \emph{\bibinfo{title}{A Game-Theoretic Measure of
  Argument Strength for Abstract Argumentation}}.
\newblock In: {\sl \bibinfo{booktitle}{Logics in Artificial Intelligence}}, pp.
  \bibinfo{pages}{285--297}, \doi{10.1007/BF01448847}.

\bibitemdeclare{article}{miller2019}
\bibitem{miller2019}
\bibinfo{author}{Tim \surnamestart Miller\surnameend} (\bibinfo{year}{2019}):
  \emph{\bibinfo{title}{Explanation in {A}rtificial {I}ntelligence: {I}nsights
  {F}rom the {S}ocial {S}ciences}}.
\newblock {\sl \bibinfo{journal}{Artificial Intelligence}}
  \bibinfo{volume}{267}, pp. \bibinfo{pages}{1 -- 38},
  \doi{10.1016/j.artint.2018.07.007}.

\bibitemdeclare{article}{modgil2009reasoning}
\bibitem{modgil2009reasoning}
\bibinfo{author}{Sanjay \surnamestart Modgil\surnameend}
  (\bibinfo{year}{2009}): \emph{\bibinfo{title}{Reasoning {A}bout {P}references
  in {A}rgumentation {F}rameworks}}.
\newblock {\sl \bibinfo{journal}{Artificial {I}ntelligence}}
  \bibinfo{volume}{173}(\bibinfo{number}{9-10}), pp. \bibinfo{pages}{901--934},
  \doi{10.1016/j.artint.2009.02.001}.

\bibitemdeclare{incollection}{perelman1971new}
\bibitem{perelman1971new}
\bibinfo{author}{Chaim \surnamestart Perelman\surnameend}
  (\bibinfo{year}{1971}): \emph{\bibinfo{title}{The {N}ew {R}hetoric}}.
\newblock In: {\sl \bibinfo{booktitle}{Pragmatics of {N}atural {L}anguages}},
  \bibinfo{publisher}{Springer}, pp. \bibinfo{pages}{145--149},
  \doi{10.1007/978-94-010-1713-8_8}.

\bibitemdeclare{inproceedings}{pu2013social}
\bibitem{pu2013social}
\bibinfo{author}{Fuan \surnamestart Pu\surnameend}, \bibinfo{author}{Jian
  \surnamestart Luo\surnameend}, \bibinfo{author}{Yulai \surnamestart
  Zhang\surnameend} \& \bibinfo{author}{Guiming \surnamestart Luo\surnameend}
  (\bibinfo{year}{2013}): \emph{\bibinfo{title}{Social {W}elfare {S}emantics
  for {V}alue-{B}ased {A}rgumentation {F}ramework}}.
\newblock In: {\sl \bibinfo{booktitle}{Proceedings of {I}nternational
  {C}onference on {K}nowledge, {S}cience, {E}ngineering and {M}anagement}},
  \bibinfo{organization}{Springer}, pp. \bibinfo{pages}{76--88}.

\bibitemdeclare{inproceedings}{rahwan2010collective}
\bibitem{rahwan2010collective}
\bibinfo{author}{Iyad \surnamestart Rahwan\surnameend} \&
  \bibinfo{author}{Fernando \surnamestart Tohm{\'e}\surnameend}
  (\bibinfo{year}{2010}): \emph{\bibinfo{title}{Collective {A}rgument
  {E}valuation as {J}udgement {A}ggregation}}.
\newblock In: {\sl \bibinfo{booktitle}{Proceedings of the International
  Conference on Autonomous Agents and Multiagent Systems}}, pp.
  \bibinfo{pages}{417--424}.

\bibitemdeclare{article}{satterthwaite1975strategy}
\bibitem{satterthwaite1975strategy}
\bibinfo{author}{Mark~Allen \surnamestart Satterthwaite\surnameend}
  (\bibinfo{year}{1975}): \emph{\bibinfo{title}{Strategy-{P}roofness and
  {A}rrow's {C}onditions: {E}xistence and {C}orrespondence {T}heorems for
  {V}oting {P}rocedures and {S}ocial {W}elfare {F}unctions}}.
\newblock {\sl \bibinfo{journal}{Journal of {E}conomic {T}heory}}
  \bibinfo{volume}{10}(\bibinfo{number}{2}), pp. \bibinfo{pages}{187--217},
  \doi{10.1016/0022-0531(75)90050-2}.

\end{thebibliography}
\end{document}